\documentclass[11pt]{article}

\usepackage[margin=1in]{geometry}
\usepackage[T1]{fontenc}
\usepackage{microtype}
\usepackage{amsmath,amssymb,amsthm,mathtools,bm}
\usepackage{booktabs,array,multirow}
\usepackage{graphicx}
\usepackage{subcaption}
\usepackage{natbib}
\usepackage{enumitem}
\usepackage{xcolor}
\usepackage{url}
\usepackage{hyperref}
\usepackage[nameinlink,capitalize]{cleveref}
\usepackage{placeins}

\hypersetup{
  colorlinks=true,
  linkcolor=blue,
  citecolor=blue,
  urlcolor=blue
}
\graphicspath{{./}}

\newcommand{\R}{\mathbb{R}}
\newcommand{\Pp}{\mathbb{P}}
\newcommand{\E}{\mathbb{E}}
\newcommand{\1}{\mathbb{I}}
\newcommand{\cS}{\mathcal{S}}
\newcommand{\cN}{\mathcal{N}}
\newcommand{\cA}{\mathcal{A}}

\newcommand{\cT}{\mathcal{T}}
\newcommand{\wh}{\widehat}
\newcommand{\wt}{\widetilde}
\newcommand{\norm}[1]{\left\lVert #1\right\rVert}
\newcommand{\abs}[1]{\left\lvert #1\right\rvert}
\newcommand{\set}[1]{\left\{#1\right\}}

\numberwithin{equation}{section}

\newtheoremstyle{journal}
  {7pt}{7pt}{\itshape}{}{\bfseries}{.}{0.5em}{}
\theoremstyle{journal}
\newtheorem{theorem}{Theorem}[section]
\newtheorem{proposition}[theorem]{Proposition}
\newtheorem{corollary}[theorem]{Corollary}

\newtheorem{assumption}{Assumption}[section]
\newtheorem{definition}[theorem]{Definition}
\newtheorem{remark}[theorem]{Remark}

\title{Cluster-Based Dimensionality Reduction by Nonparametric Distributional Screening}
\author{
Sanoja Jha$^*$,
Rishikesh Muralimohan$^*$,
Praveen Athauda Arachchi$^*$,
Abhishek Bhattacharjee
\\
\small
\texttt{sanoja.pg26188@gipe.ac.in};
\texttt{rishikesh.muralimohan@gmail.com};
\texttt{praveenathaudaarachchi@gmail.com}; \\
\small \texttt{abhishek@theabstractmath.com}
}
\date{}

\begin{document}
\maketitle

\begingroup
\renewcommand{\thefootnote}{\fnsymbol{footnote}}
\footnotetext[1]{Equal contribution.}
\endgroup

\begin{abstract}
We consider dimensionality reduction for high-dimensional observations accompanied by a supplied partition into two or more clusters. The objective is not to construct a low-rank projection, but to retain an interpretable subset of the original coordinates that preserves the distributional information distinguishing the clusters. For each coordinate, the proposed procedure compares the cluster-specific empirical distribution functions through a several-sample Kolmogorov--Smirnov separation statistic. We formalize the resulting marginal cluster support and establish simultaneous finite-sample concentration over all coordinates, explicit bounds for false inclusions and omissions, and exact support recovery when the minimum distributional separation dominates the high-dimensional stochastic error. We also quantify the dimension inflation induced by using an unadjusted testing level and give a familywise-error-controlled version. Under a conditional sufficiency condition, sure screening preserves the full-data posterior cluster probabilities, mutual information, and Bayes risk; an additional result characterizes robustness to imperfectly estimated cluster labels. The procedure is invariant to strictly increasing coordinate transformations and can retain low-variance cluster signals that principal components may discard. We further develop average dual information, a criterion combining partition agreement after transformation with structural coverage of cluster-relevant coordinates, and derive its basic properties and consistency. Simulations illustrate the theory, the interpretability of the selected coordinates, and the distinction between cluster-directed screening and variance-directed projection.
\end{abstract}

\noindent\textbf{Keywords:} cluster preservation; dimensionality reduction; high-dimensional screening; Kolmogorov--Smirnov statistic; Matthews correlation coefficient; model-free variable selection.

\section{Introduction}
\label{sec:introduction}

High-dimensional data are often analyzed after the observations have been assigned to scientifically meaningful groups. The groups may be externally observed classes, treatment arms, disease subtypes, or clusters produced by a separate exploratory analysis. In these settings, the purpose of dimension reduction is frequently more specific than compression: one seeks a smaller representation that retains the variables responsible for the observed separation and remains interpretable in the original measurement scale. A projection such as principal component analysis (PCA) can reduce dimension efficiently, but its directions are chosen to explain marginal variation rather than cluster separation, and each direction is generally a dense combination of the original variables. These features can be undesirable when the scientific question concerns which measurements distinguish the groups.

A complementary literature treats high-dimensional reduction as a screening problem. Sure independence screening reduces an ultrahigh-dimensional feature set by ranking marginal associations with an outcome \citep{fanlv2008}; subsequent work developed model-free measures that detect nonlinear or distributional dependence \citep{li2012}. In classification, the Kolmogorov filter compares class-conditional distribution functions and therefore responds to differences in location, scale, skewness, or tails rather than only to mean shifts \citep{maizou2013}. The fused Kolmogorov filter extends this idea to broader response types, including multiclass outcomes \citep{maizou2015}. Related work on sparse clustering instead estimates the partition and the selected variables jointly \citep{wittentibshirani2010}, whereas supervised principal components first screens variables and then constructs linear combinations \citep{bair2006}.

The procedure studied here belongs to the Kolmogorov screening family, but it is formulated around a different target: preservation of a supplied cluster partition. This distinction matters. A coordinate is declared active when at least two cluster-specific marginal distributions differ, and the reduced data matrix contains the corresponding original columns. The statistical contribution is to connect this marginal target to a precise notion of cluster integrity. We derive finite-sample simultaneous error bounds that require no independence across coordinates, identify the rate at which the ambient dimension may grow, and show exactly how a fixed per-coordinate significance level contributes approximately \(p_0\alpha\) noise variables when \(p_0\) coordinates are inactive. Under a conditional sufficiency assumption, retaining all marginally active coordinates is enough to preserve the full posterior distribution of the cluster label and hence the Bayes decision problem. We also quantify the effect of cluster-label contamination, establish transformation invariance, and give a population construction in which PCA removes all cluster information while the proposed screen remains consistent.

A second contribution is a formal treatment of the average dual information (ADI) criterion. ADI combines two conceptually distinct notions of preservation: agreement between the partitions obtained before and after a transformation, and structural coverage of the original cluster-relevant coordinates by the retained transformed representation. We make the cluster-agreement term invariant to arbitrary relabeling, define the structural term through a dependency map, and establish boundedness, dominance, nuisance-augmentation, linear-growth, and consistency properties. The criterion is intentionally reference-dependent: it measures fidelity to a specified partition and a specified set of cluster-relevant source variables, not an absolute amount of Shannon information.

The simulations in the original study are particularly informative when read through the theory. In the Gaussian screening experiment, most reported reduced dimensions are close to the oracle signal count plus the expected number of unadjusted false positives. Weak equal-variance location shifts produce the anticipated loss of power, whereas variance differences remain detectable even when the means coincide. In downstream classification, the full and reduced logistic models have nearly indistinguishable threshold-performance curves after eight of twelve variables are removed. In a \(10{,}000\)-variable nonlinear experiment, the cluster-directed screen substantially outperforms an unsupervised PCA pipeline. Finally, the ADI experiments show increasing scores as informative coordinates are restored and, in a controlled sequence, little visible sensitivity to the number of nuisance coordinates.

The remainder of the paper is organized as follows. \Cref{sec:method} defines the target and the screening procedure. \Cref{sec:theory} develops the finite-sample and asymptotic theory. \Cref{sec:adi} formalizes ADI. \Cref{sec:simulation} revisits the simulation experiments in detail. \Cref{sec:applications} discusses scientific use and implementation, and \Cref{sec:discussion} concludes. All proofs are given in \Cref{app:proofs}.

\section{Cluster-directed distributional screening}
\label{sec:method}

\subsection{Sampling framework and marginal cluster support}
\label{subsec:setup}

Let \(r\geq 2\) be the number of clusters and \(p\) the number of measured coordinates. After reordering the rows by cluster, write the observations as
\[
  \bm X_{gi}=(X_{gi1},\ldots,X_{gip})^\top\in\R^p,
  \qquad i=1,\ldots,n_g,\quad g=1,\ldots,r,
\]
where \(n=\sum_{g=1}^r n_g\) and \(n_{\min}=\min_g n_g\). The following assumption is maintained throughout the screening theory.

\begin{assumption}[Stratified sampling]
\label{ass:sampling}
For each \(g\), the vectors \(\bm X_{g1},\ldots,\bm X_{gn_g}\) are independent and identically distributed according to a distribution \(P_g\) on \(\R^p\), and the samples from distinct clusters are independent. Dependence among the \(p\) coordinates of the same observation is unrestricted.
\end{assumption}

For coordinate \(j\), let
\[
  F_{gj}(x)=P_g(X_{gij}\leq x),
  \qquad g=1,\ldots,r,
\]
be the cluster-specific marginal distribution function. Define the population separation
\begin{equation}
\label{eq:population-separation}
  \Delta_j
  =\max_{1\leq g<h\leq r}\norm{F_{gj}-F_{hj}}_\infty,
  \qquad
  \norm{H}_\infty=\sup_{x\in\R}\abs{H(x)}.
\end{equation}
The marginal cluster support is
\begin{equation}
\label{eq:support}
  \cS=\set{j\in\{1,\ldots,p\}:\Delta_j>0},
  \qquad s=\abs{\cS},
\end{equation}
and \(\cN=\cS^c\), with \(p_0=\abs{\cN}=p-s\). Thus \(j\in\cN\) if and only if all \(r\) cluster-specific marginal distributions for coordinate \(j\) are identical.

\begin{remark}[The target is marginal]
\label{rem:marginal}
The equality \(F_{1j}=\cdots=F_{rj}\) means that coordinate \(j\) has no marginal distributional association with the supplied cluster label. It does not imply that \(j\) is irrelevant in every joint sense. Interaction-only structures, such as an exclusive-or relation in which each coordinate is marginally independent of the label but the pair determines the label, are outside the target in \eqref{eq:support}. This limitation is shared by marginal screening procedures and motivates conditional or iterative extensions when such structures are scientifically plausible.
\end{remark}

For inferential interpretations of the tests, the supplied labels must not be chosen adaptively from the same coordinate values without accounting for that choice.

\begin{assumption}[Exogenous or independently constructed partition]
\label{ass:labels}
The cluster labels used by the screening procedure are externally observed, treated as fixed by design, or obtained from data independent of the coordinate values entering the featurewise tests. If a clustering algorithm is used on the same observations, an independent pilot sample or an appropriate sample-splitting scheme is used to construct and transfer the partition.
\end{assumption}

The concentration results below remain algebraically valid for fixed index sets, but \Cref{ass:labels} is needed for ordinary null calibration and scientific interpretation of the resulting \(p\)-values. When labels are estimated with error, \Cref{thm:estimated-labels} gives a deterministic robustness correction.

\subsection{The screening statistic and selected representation}
\label{subsec:screening}

Let
\[
  \wh F_{gj}(x)=\frac{1}{n_g}\sum_{i=1}^{n_g}\1\{X_{gij}\leq x\}
\]
be the empirical distribution function in cluster \(g\). We use the max-pairwise several-sample Kolmogorov--Smirnov separation
\begin{equation}
\label{eq:empirical-separation}
  \wh\Delta_j
  =\max_{1\leq g<h\leq r}\norm{\wh F_{gj}-\wh F_{hj}}_\infty.
\end{equation}
This statistic tests
\[
  H_{0j}:F_{1j}=\cdots=F_{rj}
  \quad\text{against}\quad
  H_{1j}:\text{at least two of these distributions differ}.
\]
Several-sample Kolmogorov--Smirnov tests have a long history; see \citet{kiefer1959}, \citet{conover1965}, \citet{wolfnaus1973}, and \citet{bohmhornik2012}. The max-pairwise form in \eqref{eq:empirical-separation} is convenient for transparent finite-sample analysis and has the same population null target.

For a deterministic threshold \(\tau>0\), define
\begin{equation}
\label{eq:threshold-selector}
  \wh\cS_\tau=\set{j:\wh\Delta_j>\tau}.
\end{equation}
In implementation, one may instead compute an exact or Monte Carlo permutation \(p\)-value \(P_j\) for each coordinate and select
\begin{equation}
\label{eq:pvalue-selector}
  \wh\cS_\alpha=\set{j:P_j\leq\alpha}.
\end{equation}
Under \(H_{0j}\) and \Cref{ass:labels}, the pooled observations of coordinate \(j\) are exchangeable across the fixed cluster sizes, so permutation calibration is finite-sample valid, including for discrete variables and ties. The direct threshold form \eqref{eq:threshold-selector} is used for theory, while \eqref{eq:pvalue-selector} matches the implementation in the simulations.

If \(\wh\cS=\{j_1<\cdots<j_m\}\), the reduced data matrix is
\begin{equation}
\label{eq:reduced-matrix}
  \bm Y_{\mathrm{red}}
  =\left(\bm Y_{j_1},\ldots,\bm Y_{j_m}\right)\in\R^{n\times m},
\end{equation}
where \(\bm Y_j\) denotes column \(j\) of the original \(n\times p\) matrix. The method therefore preserves the original units and variable identities. Ranking by \(\wh\Delta_j\) gives a deterministic importance order, and \Cref{subsec:bootstrap-ranking} gives a bootstrap stability version.

A fixed unadjusted \(\alpha\) controls the Type I error of each coordinate but not the number of inactive coordinates retained across a large feature set. For confirmatory screening, Bonferroni or Holm adjustment controls familywise error under arbitrary dependence. Benjamini--Hochberg adjustment may be used when false discovery rate control is the target and its dependence conditions are defensible \citep{benjaminihochberg1995}. For predictive work, \(\alpha\) may be tuned by nested cross-validation, with all screening repeated inside each training fold.

\subsection{A sufficiency condition for cluster integrity}
\label{subsec:sufficiency}

Let \((\bm X,Z)\) denote a new observation and its cluster label, with \(Z\in\{1,\ldots,r\}\). The marginal support in \eqref{eq:support} becomes sufficient for preserving the full cluster decision problem under the following condition.

\begin{assumption}[Marginal cluster sufficiency]
\label{ass:sufficiency}
The marginally active coordinates satisfy
\begin{equation}
\label{eq:sufficiency}
  Z\ \perp\!\!\!\perp\ \bm X_{\cS^c}\mid \bm X_{\cS}.
\end{equation}
Equivalently, for every \(g\),
\(
  \Pp(Z=g\mid\bm X)=\Pp(Z=g\mid\bm X_{\cS})
\)
almost surely.
\end{assumption}

\Cref{ass:sufficiency} separates two logically distinct claims. The KS screen can recover the marginal support without it. The assumption is required only to conclude that this support contains all information needed for the cluster label. It allows arbitrarily strong dependence among coordinates and permits inactive coordinates to be associated with active coordinates, provided they add no label information after conditioning on \(\bm X_{\cS}\).

\subsection{Bootstrap stability ranking}
\label{subsec:bootstrap-ranking}

To quantify selection stability, draw \(B\) bootstrap samples independently within each cluster, preserving \(n_1,\ldots,n_r\). Let \(W_{bj}=1\) if coordinate \(j\) is selected in replicate \(b\), and define
\begin{equation}
\label{eq:bootstrap-score}
  \wh\pi_j^*=\frac{1}{B}\sum_{b=1}^B W_{bj}.
\end{equation}
Ordering \(\wh\pi_j^*\) yields a stability ranking. This ranking has a different interpretation from ordering \(\wh\Delta_j\): it estimates the probability that a feature survives the complete resampling-and-screening procedure at the chosen threshold. \Cref{thm:bootstrap-ranking} supplies a finite-\(B\) uniform error bound.

\section{Theoretical properties}
\label{sec:theory}

\subsection{Uniform concentration and support recovery}
\label{subsec:concentration}

The main concentration result follows from the sharp Dvoretzky--Kiefer--Wolfowitz inequality \citep{massart1990}. It is distribution-free and does not require independence across coordinates.

\begin{theorem}[Uniform concentration of the cluster separations]
\label{thm:uniform}
Under \Cref{ass:sampling}, for every \(u>0\),
\begin{equation}
\label{eq:uniform-tail}
  \Pp\left(
    \max_{1\leq j\leq p}\abs{\wh\Delta_j-\Delta_j}>u
  \right)
  \leq 2rp\exp\left(-\frac{n_{\min}u^2}{2}\right).
\end{equation}
Consequently, for every \(\eta\in(0,1)\), with probability at least \(1-\eta\),
\begin{equation}
\label{eq:uniform-radius}
  \max_{1\leq j\leq p}\abs{\wh\Delta_j-\Delta_j}
  \leq
  u_n(\eta)
  :=\left\{\frac{2}{n_{\min}}
  \log\left(\frac{2rp}{\eta}\right)\right\}^{1/2}.
\end{equation}
\end{theorem}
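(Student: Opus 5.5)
The argument reduces the max-pairwise statistic to one-sample Kolmogorov--Smirnov deviations and then applies Massart's sharp Dvoretzky--Kiefer--Wolfowitz inequality with a union bound over clusters and coordinates.

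\emph{Deterministic reduction.} Fix a coordinate \(j\). For any pair \(g<h\), the reverse triangle inequality in \(\norm{\cdot}_\infty\) gives
\[
\bigl|\norm{\wh F_{gj}-\wh F_{hj}}_\infty-\norm{F_{gj}-F_{hj}}_\infty\bigr|
\le \norm{\wh F_{gj}-F_{gj}}_\infty+\norm{\wh F_{hj}-F_{hj}}_\infty .
\]
A maximum over a finite index set is \(1\)-Lipschitz in sup norm, that is, \(\abs{\max_k a_k-\max_k b_k}\le\max_k\abs{a_k-b_k}\). Taking the maximum over pairs therefore yields
\[
\abs{\wh\Delta_j-\Delta_j}\le 2D_j,\qquad D_j:=\max_{1\le g\le r}\norm{\wh F_{gj}-F_{gj}}_\infty .
\]
Consequently, the event \(\{\max_j\abs{\wh\Delta_j-\Delta_j}>u\}\) is contained in
\[
\bigcup_{j=1}^p\bigcup_{g=1}^r\set{\norm{\wh F_{gj}-F_{gj}}_\infty>u/2}.
\]

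\emph{Probabilistic step.} Under \Cref{ass:sampling}, for each fixed pair \((g,j)\) the values \(X_{g1j},\ldots,X_{gn_gj}\) are i.i.d.\ with distribution function \(F_{gj}\). Massart's inequality, valid for arbitrary \(F\) including atoms, gives
\[
\Pp\bigl(\norm{\wh F_{gj}-F_{gj}}_\infty>u/2\bigr)\le 2\exp(-2n_g(u/2)^2)=2\exp(-n_gu^2/2)\le 2\exp(-n_{\min}u^2/2).
\]
A union bound over the \(rp\) pairs gives at most \(2rp\exp(-n_{\min}u^2/2)\), which is \eqref{eq:uniform-tail}. The union bound needs no independence across coordinates or clusters, which is why arbitrary within-observation dependence is allowed. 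Massart's bound is stated for all \(u\); when \(u/2\) exceeds its validity range the right side already exceeds one or the probability is zero, so nothing is lost.

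\emph{Radius.} Set \(2rp\exp(-n_{\min}u^2/2)=\eta\) and solve for \(u\), giving \(u=u_n(\eta)\). Then
\[
\Pp\bigl(\max_j\abs{\wh\Delta_j-\Delta_j}>u_n(\eta)\bigr)\le\eta,
\]
and taking the complement gives \eqref{eq:uniform-radius}.

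\emph{Main obstacle.} The delicate point is the constant. A naive pairwise argument that bounds each \(\norm{\wh F_g-\wh F_h-(F_g-F_h)}_\infty\) separately would produce a union over \(\binom r2\) pairs rather than \(r\) clusters. Reducing everything to the \(r\) one-sample deviations \(D_j\), with the factor \(2\) absorbed into \(u/2\), is exactly what yields the stated \(2rp\) prefactor and the exponent \(n_{\min}u^2/2\).
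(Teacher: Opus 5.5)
Your proof is correct and matches the paper's argument step for step. The reverse triangle inequality and the $1$-Lipschitz property of a finite maximum give $\abs{\wh\Delta_j-\Delta_j}\le 2\max_g\norm{\wh F_{gj}-F_{gj}}_\infty$; Massart's bound at level $u/2$ with a union bound over the $rp$ cluster--coordinate pairs, followed by solving $2rp\exp(-n_{\min}u^2/2)=\eta$, then yields both displays (your side remark on Massart's validity range is harmless, since the restriction only matters when $2\exp(-2n_g(u/2)^2)\ge 1$ and the bound is trivial).
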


The bound immediately yields separate controls for false inclusion and false omission.

\begin{theorem}[Finite-sample screening errors]
\label{thm:screening-errors}
Suppose \Cref{ass:sampling} holds, \(s\geq1\), and
\begin{equation}
\label{eq:deltamin}
  \Delta_{\min}=\min_{j\in\cS}\Delta_j>0.
\end{equation}
For any threshold \(0<\tau<\Delta_{\min}\),
\begin{align}
\Pp(\wh\cS_\tau\not\subseteq\cS)
&\leq 2rp_0\exp\left(-\frac{n_{\min}\tau^2}{2}\right),
\label{eq:false-inclusion}\\
\Pp(\cS\not\subseteq\wh\cS_\tau)
&\leq 2rs\exp\left
(-\frac{n_{\min}(\Delta_{\min}-\tau)^2}{2}\right).
\label{eq:false-omission}
\end{align}
Hence
\begin{equation}
\label{eq:exact-recovery-bound}
\Pp(\wh\cS_\tau\neq\cS)
\leq
2rp_0e^{-n_{\min}\tau^2/2}
+2rse^{-n_{\min}(\Delta_{\min}-\tau)^2/2}.
\end{equation}
In particular, setting \(\tau=u_n(\eta)\) gives exact recovery with probability at least \(1-\eta\) whenever \(\Delta_{\min}>2u_n(\eta)\).
\end{theorem}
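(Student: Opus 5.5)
The plan is to reuse the per-coordinate, per-cluster concentration behind \Cref{thm:uniform}, but to take the union bound only over the coordinates relevant to each error event instead of over all \(p\) coordinates. The basic tool is the triangle-inequality reduction: for every coordinate \(j\) and every pair \(g<h\),
\[
\bigl|\norm{\wh F_{gj}-\wh F_{hj}}_\infty-\norm{F_{gj}-F_{hj}}_\infty\bigr|\le \norm{\wh F_{gj}-F_{gj}}_\infty+\norm{\wh F_{hj}-F_{hj}}_\infty .
\]
Since a maximum of finitely many numbers is \(1\)-Lipschitz in each entry, this gives \(|\wh\Delta_j-\Delta_j|\le 2\max_g\norm{\wh F_{gj}-F_{gj}}_\infty\).

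Applying the sharp DKW inequality of \citet{massart1990} to each cluster with \(t=u/2\) gives
\[
\Pp\bigl(\norm{\wh F_{gj}-F_{gj}}_\infty>u/2\bigr)\le 2e^{-2n_g(u/2)^2}\le 2e^{-n_{\min}u^2/2}.
\]
A union bound over the \(r\) clusters then yields, for each fixed \(j\),
\[
\Pp(|\wh\Delta_j-\Delta_j|>u)\le 2re^{-n_{\min}u^2/2}.
\]
This is exactly the single-coordinate ingredient of \eqref{eq:uniform-tail}. Because it uses only the marginal law of coordinate \(j\) within each cluster, no assumption on dependence across coordinates is needed.

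For the false-inclusion bound \eqref{eq:false-inclusion}, note that \(\wh\cS_\tau\not\subseteq\cS\) means some \(j\in\cN\) has \(\wh\Delta_j>\tau\). For such \(j\) we have \(\Delta_j=0\), so \(|\wh\Delta_j-\Delta_j|>\tau\). A union bound over the \(p_0\) indices in \(\cN\), using the per-coordinate estimate with \(u=\tau\), gives \(2rp_0e^{-n_{\min}\tau^2/2}\).

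For the false-omission bound \eqref{eq:false-omission}, note that \(\cS\not\subseteq\wh\cS_\tau\) means some \(j\in\cS\) has \(\wh\Delta_j\le\tau\). Since \(\Delta_j\ge\Delta_{\min}\), this forces \(\Delta_j-\wh\Delta_j\ge\Delta_{\min}-\tau>0\). Here the event has a non-strict inequality while the DKW tail is stated with a strict one. To handle this, I will apply the per-coordinate bound with \(u=\Delta_{\min}-\tau-\epsilon\), take a union over the \(s\) indices, and let \(\epsilon\downarrow0\). The right-hand side is continuous in \(u\), so the limit gives \(2rse^{-n_{\min}(\Delta_{\min}-\tau)^2/2}\).

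The bound \eqref{eq:exact-recovery-bound} follows because \(\{\wh\cS_\tau\neq\cS\}\) is the union of the two error events.

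For the final claim, I will not use \eqref{eq:exact-recovery-bound} directly. Instead I will work on the event \(E\) of \Cref{thm:uniform}, where \(\max_j|\wh\Delta_j-\Delta_j|\le u_n(\eta)\) and which has probability at least \(1-\eta\). Set \(\tau=u_n(\eta)\). On \(E\), each \(j\in\cN\) has \(\wh\Delta_j\le u_n=\tau\), so \(j\) is not selected. Each \(j\in\cS\) has \(\wh\Delta_j\ge\Delta_{\min}-u_n>u_n=\tau\) whenever \(\Delta_{\min}>2u_n\), so \(j\) is selected. Hence \(\wh\cS_\tau=\cS\) on \(E\), which gives exact recovery with probability at least \(1-\eta\).

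The only delicate points are the strict versus non-strict inequalities, namely the boundary case \(\wh\Delta_j=\tau\) and the limiting argument in the omission bound. I expect these to be routine, so there is no substantive obstacle beyond checking the constant \(2\) in the triangle-inequality reduction.
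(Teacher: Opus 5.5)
Your proposal is correct and follows essentially the same route as the paper. Both arguments use the reduction $|\wh\Delta_j-\Delta_j|\le 2\max_g\norm{\wh F_{gj}-F_{gj}}_\infty$, apply DKW at level $u/2$, take union bounds over the $r$ clusters and then separately over $\cN$ or $\cS$, and prove the final claim on the event of \Cref{thm:uniform}. Your $\epsilon\downarrow0$ step in the omission bound is a small improvement in rigor, since the paper bounds the non-strict event $\max_g\norm{\wh F_{gj}-F_{gj}}_\infty\ge(\Delta_{\min}-\tau)/2$ directly by the strict DKW tails without comment.
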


\begin{corollary}[Exact recovery with growing dimension]
\label{cor:highdim}
Let \(p=p_n\), \(r=r_n\), and \(n_{\min}=n_{\min,n}\) vary with \(n\). Put
\(
  a_n=\{\log(r_np_n)/n_{\min,n}\}^{1/2}
\)
and suppose \(r_np_n\to\infty\). Choose constants \(C>\sqrt{2}\) and \(D>\sqrt{2}\), set \(\tau_n=Ca_n\), and assume
\begin{equation}
\label{eq:beta-min-rate}
  \Delta_{\min,n}\geq(C+D)a_n
\end{equation}
for all sufficiently large \(n\). Then
\(
  \Pp(\wh\cS_{\tau_n}=\cS)\to1.
\)
Thus a fixed positive separation permits \(p_n\) to grow nearly exponentially in \(n_{\min,n}\), provided \(\log(r_np_n)=o(n_{\min,n})\).
\end{corollary}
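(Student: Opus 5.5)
The plan is to apply \Cref{thm:screening-errors} along the sequence with \(\tau=\tau_n=Ca_n\). First we confirm the threshold is admissible. Since \(D>0\), condition \eqref{eq:beta-min-rate} gives \(\Delta_{\min,n}\geq(C+D)a_n>Ca_n=\tau_n\) for all large \(n\), so \(0<\tau_n<\Delta_{\min,n}\). (We assume \(s_n\geq1\); if \(\cS=\emptyset\) only the false-inclusion term appears and the argument is the same.) Then \eqref{eq:exact-recovery-bound} yields
\begin{equation*}
\Pp(\wh\cS_{\tau_n}\neq\cS)\leq 2r_np_{0}e^{-n_{\min}\tau_n^2/2}+2r_ns_ne^{-n_{\min}(\Delta_{\min,n}-\tau_n)^2/2}.
\end{equation*}

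Next we bound both terms using \(p_0,s_n\leq p_n\). In the first term, \(n_{\min}\tau_n^2/2=(C^2/2)\log(r_np_n)\), so the term is at most
\begin{equation*}
2r_np_n(r_np_n)^{-C^2/2}=2(r_np_n)^{1-C^2/2}.
\end{equation*}
In the second term, \(\Delta_{\min,n}-\tau_n\geq Da_n\geq0\), and \(x\mapsto e^{-n_{\min}x^2/2}\) is decreasing on \([0,\infty)\). The term is therefore at most \(2(r_np_n)^{1-D^2/2}\). Because \(C,D>\sqrt2\), both exponents are strictly negative, and \(r_np_n\to\infty\) sends both bounds to zero. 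Hence \(\Pp(\wh\cS_{\tau_n}=\cS)\to1\).

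For the final remark, suppose \(\Delta_{\min,n}\geq\delta>0\) is fixed and \(\log(r_np_n)=o(n_{\min,n})\). Then \(a_n\to0\), so \((C+D)a_n\leq\delta\) eventually and \eqref{eq:beta-min-rate} holds. This allows \(\log p_n\) to be any \(o(n_{\min,n})\), which is the nearly exponential growth claimed.

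None of these steps is a serious obstacle; the point needing care is the monotonicity step in the omission term. We must make sure \(\Delta_{\min,n}-\tau_n\) is nonnegative before replacing it by the smaller quantity \(Da_n\), and \eqref{eq:beta-min-rate} guarantees exactly this.
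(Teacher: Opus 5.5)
Your proposal is correct and follows the paper's proof essentially step for step: you apply \Cref{thm:screening-errors} with \(\tau_n=Ca_n\), bound \(p_0\) and \(s\) by \(p_n\) to obtain \(2(r_np_n)^{1-C^2/2}\) and \(2(r_np_n)^{1-D^2/2}\), and derive the fixed-separation remark from \(a_n\to0\). The extra points you verify, namely the admissibility condition \(0<\tau_n<\Delta_{\min,n}\) and the monotonicity step in the omission term, are left implicit in the paper, and you handle them correctly.
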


\Cref{thm:screening-errors} also clarifies the difference between sure screening and exact recovery. A threshold below the signal level makes the omission probability small, even if some inactive variables survive. Exact recovery additionally requires the threshold to dominate the largest null fluctuation. This distinction is important for downstream prediction, where false positives can increase estimation variance but need not remove population cluster information.

\subsection{What a fixed testing level does to the selected dimension}
\label{subsec:fixed-alpha}

Let \(P_j\) be the featurewise permutation \(p\)-value and \(\wh m_\alpha=\abs{\wh\cS_\alpha}\). For active coordinates define the marginal power \(\pi_j(\alpha)=\Pp(P_j\leq\alpha)\).

\begin{proposition}[Dimension inflation at an unadjusted level]
\label{prop:fixed-alpha}
Assume each null \(p\)-value is super-uniform, so that
\(
  \Pp(P_j\leq t)\leq t
\)
for all \(j\in\cN\) and \(t\in[0,1]\). Let
\(
  V_\alpha=\sum_{j\in\cN}\1\{P_j\leq\alpha\}
\)
be the number of inactive coordinates selected. Then
\begin{align}
  \E(V_\alpha)&\leq p_0\alpha,
  \label{eq:ev}\\
  \E(\wh m_\alpha)&\leq p_0\alpha+
  \sum_{j\in\cS}\pi_j(\alpha).
  \label{eq:em}
\end{align}
If the null \(p\)-values are exactly uniform, equality holds in \eqref{eq:ev}. If they are also mutually independent, then
\begin{equation}
\label{eq:binomial}
  V_\alpha\sim\operatorname{Binomial}(p_0,\alpha)
\end{equation}
and, for every \(t>0\),
\begin{equation}
\label{eq:null-hoeffding}
  \Pp\left(\abs{V_\alpha-p_0\alpha}\geq t\right)
  \leq 2\exp\left(-\frac{2t^2}{p_0}\right).
\end{equation}
When all active coordinates have power close to one, the selected dimension is therefore centered near \(s+p_0\alpha\).
\end{proposition}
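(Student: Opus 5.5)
The plan is to write every quantity as a sum of indicators and use linearity of expectation, which needs no assumption about dependence across coordinates. First I will prove \eqref{eq:ev}. Writing \(V_\alpha=\sum_{j\in\cN}\1\{P_j\leq\alpha\}\), linearity gives \(\E(V_\alpha)=\sum_{j\in\cN}\Pp(P_j\leq\alpha)\). Applying super-uniformity at \(t=\alpha\) bounds each summand by \(\alpha\), so \(\E(V_\alpha)\leq p_0\alpha\). If the null \(p\)-values are exactly uniform, each summand equals \(\alpha\), and this gives the stated equality. For the permutation \(p\)-values of \Cref{subsec:screening}, super-uniformity is exactly the finite-sample validity noted there under \Cref{ass:labels}. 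I will simply take it as the hypothesis of the proposition.

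Next I will prove \eqref{eq:em}. Since \(\cS\) and \(\cN\) partition \(\{1,\ldots,p\}\), the selected dimension splits as \(\wh m_\alpha=V_\alpha+\sum_{j\in\cS}\1\{P_j\leq\alpha\}\). Taking expectations, the second sum becomes \(\sum_{j\in\cS}\pi_j(\alpha)\) by the definition of marginal power. Combining this with the bound on \(\E(V_\alpha)\) gives \eqref{eq:em}.

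For the distributional claims, suppose the null \(p\)-values are uniform and mutually independent. Then the indicators \(\1\{P_j\leq\alpha\}\), \(j\in\cN\), are independent Bernoulli\((\alpha)\) variables, because \(\Pp(P_j\leq\alpha)=\alpha\) for a uniform variable on \([0,1]\). Their sum is therefore Binomial\((p_0,\alpha)\), which is \eqref{eq:binomial}. For \eqref{eq:null-hoeffding}, I will apply Hoeffding's inequality to \(p_0\) independent summands taking values in \([0,1]\). Its two-sided form gives \(\Pp(\abs{V_\alpha-\E V_\alpha}\geq t)\leq 2\exp(-2t^2/p_0)\), and \(\E V_\alpha=p_0\alpha\) by the equality case above.

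For the closing remark, I will bound the gap between \(\E(\wh m_\alpha)\) and \(s+p_0\alpha\) in the exact-uniform case. There \(\E(\wh m_\alpha)=p_0\alpha+\sum_{j\in\cS}\pi_j(\alpha)\), so the gap is \(s-\sum_{j\in\cS}\pi_j(\alpha)=\sum_{j\in\cS}(1-\pi_j(\alpha))\). This is small when every power is near one, so the selected dimension is centered near \(s+p_0\alpha\). The concentration of \(V_\alpha\) in \eqref{eq:null-hoeffding} supports reading ``centered'' as more than a statement about the mean.

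I do not expect a real obstacle. The one point needing care is the exact-uniformity hypothesis. Permutation \(p\)-values for discrete data or a finite number of permutations are only super-uniform on a grid, so equality and the Binomial law should be stated only under the explicit uniformity assumption, as the proposition does. Under mere super-uniformity and independence, I would still obtain a one-sided tail bound, because \(V_\alpha\) is stochastically dominated by a Binomial\((p_0,\alpha)\) variable, but I will not need this.
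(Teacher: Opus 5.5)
Your proposal is correct and follows the paper's proof essentially step for step: the indicator decomposition with linearity of expectation and super-uniformity for \eqref{eq:ev}, the split $\wh m_\alpha=V_\alpha+\sum_{j\in\cS}\1\{P_j\leq\alpha\}$ for \eqref{eq:em}, independent Bernoulli$(\alpha)$ indicators for the Binomial law, and two-sided Hoeffding for \eqref{eq:null-hoeffding}. Your explicit gap $\sum_{j\in\cS}\{1-\pi_j(\alpha)\}$ for the closing remark is a small, valid addition that the paper leaves informal.
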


\begin{proposition}[Bonferroni familywise error control]
\label{prop:bonferroni}
Under the super-uniform null condition in \Cref{prop:fixed-alpha}, the selector
\begin{equation}
\label{eq:bonferroni-selector}
  \wh\cS_{\mathrm{Bonf}}(q)
  =\set{j:P_j\leq q/p}
\end{equation}
obeys
\begin{equation}
\label{eq:fwer}
  \Pp\left(\wh\cS_{\mathrm{Bonf}}(q)\not\subseteq\cS\right)
  \leq q
\end{equation}
for every \(q\in(0,1)\), without any assumption on dependence among coordinates or \(p\)-values.
\end{proposition}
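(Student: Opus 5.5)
The plan is a union bound over the inactive coordinates, using only the super-uniformity hypothesis. First, the event that \(\wh\cS_{\mathrm{Bonf}}(q)\) fails to be contained in \(\cS\) occurs exactly when some \(j\in\cN\) satisfies \(P_j\leq q/p\). Hence
\[
  \set{\wh\cS_{\mathrm{Bonf}}(q)\not\subseteq\cS}
  =\bigcup_{j\in\cN}\set{P_j\leq q/p}.
\]

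Next, I apply Boole's inequality to this union. Boole's inequality holds for arbitrary events, so no dependence structure among the \(P_j\) or among the coordinates is needed. This is what justifies the phrase ``without any assumption on dependence''. The union bound gives
\[
  \Pp\bigl(\wh\cS_{\mathrm{Bonf}}(q)\not\subseteq\cS\bigr)
  \leq\sum_{j\in\cN}\Pp(P_j\leq q/p).
\]

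Then I invoke the super-uniform null condition from \Cref{prop:fixed-alpha} at \(t=q/p\in[0,1]\), which is valid because \(q<1\) and \(p\geq1\). Each summand is therefore at most \(q/p\), and the sum is bounded by \(p_0q/p\leq q\), since \(p_0\leq p\). When \(\cN\) is empty the event is impossible and the bound holds trivially.

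There is no real obstacle in this argument. The only point requiring care is that super-uniformity is needed only marginally for each null coordinate, and that is exactly the hypothesis supplied. As a remark, the argument gives the slightly sharper bound \(p_0q/p\).
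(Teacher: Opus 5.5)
Your proposal is correct and is essentially the paper's proof. The paper also writes the false-inclusion event as the union of \(\{P_j\leq q/p\}\) over \(j\in\cN\), applies the union bound and super-uniformity to obtain \(p_0q/p\leq q\), and notes that no dependence assumption is used.
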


The fixed-level procedure remains useful as an exploratory screen or as a tuning path for prediction. \Cref{prop:fixed-alpha} shows, however, that \(\alpha\) is also an implicit dimension parameter. In a dataset with thousands of inactive coordinates, \(\alpha=0.05\) can retain hundreds of nuisance variables even when every active coordinate is detected.

\subsection{Preservation of the cluster decision problem}
\label{subsec:bayes}

For a coordinate set \(T\subseteq\{1,\ldots,p\}\), let \(\bm X_T\) denote the corresponding subvector. For a bounded loss \(L(a,Z)\in[0,L_{\max}]\), define the Bayes risk based on \(\bm X_T\) by
\begin{equation}
\label{eq:bayes-risk}
  R^*(T)=
  \E\left[
  \inf_a \E\{L(a,Z)\mid\bm X_T\}
  \right].
\end{equation}
For zero-one loss, \(R^*(T)\) is the minimum classification error achievable from \(\bm X_T\).

\begin{theorem}[Cluster integrity under sure screening]
\label{thm:cluster-integrity}
Suppose \Cref{ass:sufficiency} holds and let \(T\supseteq\cS\). Then, for every cluster \(g\),
\begin{equation}
\label{eq:posterior-preservation}
  \Pp(Z=g\mid\bm X_T)
  =\Pp(Z=g\mid\bm X_{\cS})
  =\Pp(Z=g\mid\bm X)
  \quad\text{almost surely}.
\end{equation}
Consequently,
\begin{equation}
\label{eq:risk-preservation}
  R^*(T)=R^*(\cS)=R^*(\{1,\ldots,p\})
\end{equation}
for every loss for which the conditional Bayes action exists. If \(H(Z)<\infty\), then
\begin{equation}
\label{eq:mi-preservation}
  I(Z;\bm X_T)=I(Z;\bm X_{\cS})=I(Z;\bm X).
\end{equation}
\end{theorem}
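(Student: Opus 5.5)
The proof is a direct application of the tower property of conditional expectation, combined with the sufficiency condition in \Cref{ass:sufficiency}. We first establish the posterior identity \eqref{eq:posterior-preservation}; the risk and mutual-information statements then follow from it, because both quantities are functionals of the posterior.

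\emph{Step 1 (posterior).} Fix \(g\) and \(T\supseteq\cS\). Since \(\cS\subseteq T\subseteq\{1,\ldots,p\}\), the \(\sigma\)-fields satisfy \(\sigma(\bm X_{\cS})\subseteq\sigma(\bm X_T)\subseteq\sigma(\bm X)\). By the tower property,
\[
  \Pp(Z=g\mid\bm X_T)=\E\{\Pp(Z=g\mid\bm X)\mid\bm X_T\}.
\]
By \Cref{ass:sufficiency}, \(\Pp(Z=g\mid\bm X)=\Pp(Z=g\mid\bm X_{\cS})\) almost surely. The right-hand side is \(\sigma(\bm X_T)\)-measurable, so it passes through the outer conditional expectation unchanged. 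This gives all three equalities in \eqref{eq:posterior-preservation}. Because \(Z\) takes finitely many values, the almost-sure exceptional sets can be unioned over \(g\). The conclusion is that the whole posterior vector coincides almost surely for \(T\), \(\cS\) and the full index set.

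\emph{Step 2 (Bayes risk).} For each action \(a\), the conditional risk is \(\E\{L(a,Z)\mid\bm X_T\}=\sum_g L(a,g)\Pp(Z=g\mid\bm X_T)\). This is a finite sum, so it is a deterministic function of the posterior vector. By Step 1, the map \(a\mapsto\E\{L(a,Z)\mid\bm X_T\}\) agrees almost surely with the corresponding maps for \(\cS\) and for the full index set, and so do their infima. Taking expectations yields \eqref{eq:risk-preservation}.

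\emph{Main obstacle.} The delicate point is that the version of the conditional expectation depends on \(a\), while the infimum runs over a possibly uncountable action set. I would avoid this issue by writing the conditional risk explicitly as the finite sum above, using a fixed regular version of the posterior; the infimum is then a pointwise operation on a single version. This is where the hypothesis that the conditional Bayes action exists, or at least that the infimum is measurable, is used. Boundedness \(L\in[0,L_{\max}]\) ensures that all expectations are finite.

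\emph{Step 3 (mutual information).} Write \(I(Z;\bm X_T)=H(Z)-H(Z\mid\bm X_T)\), with
\[
  H(Z\mid\bm X_T)=\E\left[-\sum_g\Pp(Z=g\mid\bm X_T)\log\Pp(Z=g\mid\bm X_T)\right].
\]
This formula is legitimate because \(Z\) is discrete with \(H(Z)<\infty\), so \(0\le H(Z\mid\bm X_T)\le H(Z)<\infty\). The integrand is again a function of the posterior vector alone. Step 1 therefore gives equal conditional entropies for \(T\), \(\cS\) and the full index set, and subtracting from the common finite \(H(Z)\) gives \eqref{eq:mi-preservation}.

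As a cross-check, the chain rule \(I(Z;\bm X)=I(Z;\bm X_T)+I(Z;\bm X_{T^c}\mid\bm X_T)\) gives the same result. The last term vanishes because \(Z\perp\!\!\!\perp\bm X_{T^c}\mid\bm X_T\), which follows from Step 1.
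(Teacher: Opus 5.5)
Your proposal is correct and follows the paper's proof essentially step for step: the tower property together with \Cref{ass:sufficiency} yields the posterior identity, and the Bayes risk and mutual information are then treated as functionals of the posterior vector. The risk is handled through the finite-sum conditional risk, and the mutual information by subtracting the conditional entropy from the finite \(H(Z)\); your remarks on versions of the conditional expectation and the chain-rule cross-check are extra care beyond what the paper writes and do not change the argument.
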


\begin{corollary}[High-probability Bayes-risk preservation]
\label{cor:risk}
Let \(\wh\cS_\tau\) be computed from a training sample independent of a future pair \((\bm X,Z)\). Under Assumptions \ref{ass:sampling} and \ref{ass:sufficiency}, for every \(0<\tau<\Delta_{\min}\),
\begin{equation}
\label{eq:expected-risk-bound}
  0\leq
  \E_{\mathrm{train}}\left[
    R^*(\wh\cS_\tau)-R^*(\{1,\ldots,p\})
  \right]
  \leq
  2L_{\max}rs
  \exp\left\{-\frac{n_{\min}(\Delta_{\min}-\tau)^2}{2}\right\}.
\end{equation}
For zero-one loss, take \(L_{\max}=1\).
\end{corollary}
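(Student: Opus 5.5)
The plan is to condition on the training sample and reduce the claim to two cases, using \Cref{thm:cluster-integrity} and the omission bound \eqref{eq:false-omission} of \Cref{thm:screening-errors}.

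\emph{Conditioning.} Because \(\wh\cS_\tau\) is computed from training data independent of \((\bm X,Z)\), conditionally on the training sample \(\wh\cS_\tau\) is a fixed index set \(T\). The quantity \(R^*(\wh\cS_\tau)\) is then the deterministic functional \(R^*(T)\) of \eqref{eq:bayes-risk} evaluated at \(T=\wh\cS_\tau\). Independence is what prevents the random selection from altering the joint law of the future pair; I would state this explicitly, for instance by writing the expectation over the training sample as an integral of \(T\mapsto R^*(T)\) against the law of \(\wh\cS_\tau\).

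\emph{Lower bound.} For any \(T\subseteq\{1,\ldots,p\}\), \(\bm X_T\) is a measurable function of \(\bm X\). By the tower property,
\[
\E\{L(a,Z)\mid\bm X_T\}=\E[\E\{L(a,Z)\mid\bm X\}\mid\bm X_T].
\]
An infimum of conditional expectations is at least the conditional expectation of the infimum, so
\[
\inf_a\E\{L(a,Z)\mid\bm X_T\}\geq\E[\inf_a\E\{L(a,Z)\mid\bm X\}\mid\bm X_T].
\]
Taking expectations gives \(R^*(T)\geq R^*(\{1,\ldots,p\})\). The only delicate point is measurability of the pointwise infimum. Since \(Z\) takes finitely many values, \(\E\{L(a,Z)\mid\bm X_T\}=\sum_g L(a,g)\Pp(Z=g\mid\bm X_T)\), and the infimum is a function of the posterior vector. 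This avoids measurability issues, provided the posterior vector is handled as a regular conditional distribution. Applying the inequality with \(T=\wh\cS_\tau\) and integrating over the training sample yields the left inequality in \eqref{eq:expected-risk-bound}.

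\emph{Upper bound.} I would split on the event \(A=\{\cS\subseteq\wh\cS_\tau\}\).
\begin{itemize}
\item On \(A\), \Cref{thm:cluster-integrity} (under \Cref{ass:sufficiency}) gives \(R^*(\wh\cS_\tau)=R^*(\{1,\ldots,p\})\), so the excess risk is zero.
\item On \(A^c\), the excess risk is at most \(L_{\max}\), because \(0\le R^*(T)\le L_{\max}\) for every \(T\) when \(L\in[0,L_{\max}]\).
\end{itemize}
Hence the expected excess risk is at most \(L_{\max}\Pp(\cS\not\subseteq\wh\cS_\tau)\). By \eqref{eq:false-omission}, valid for \(0<\tau<\Delta_{\min}\) under \Cref{ass:sampling}, this is at most
\[
L_{\max}\cdot 2rs\exp\{-n_{\min}(\Delta_{\min}-\tau)^2/2\}.
\]
For zero-one loss \(L_{\max}=1\). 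If \(s=0\), then \(A\) holds trivially and the bound is zero, so that case is covered as well.

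\emph{Main obstacle.} I expect the main difficulty to be the formal justification that \(R^*(\wh\cS_\tau)\), with a data-dependent index set, is well defined and that its training expectation equals \(\sum_T \Pp(\wh\cS_\tau=T)R^*(T)\). Since there are only finitely many subsets \(T\), this reduces to a finite sum once independence is invoked, and the rest of the argument is direct bookkeeping.
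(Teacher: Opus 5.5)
Your proposal is correct and follows the paper's proof essentially step for step: nonnegativity of the excess risk from monotonicity of the Bayes risk in the coordinate set, zero excess risk on $\{\cS\subseteq\wh\cS_\tau\}$ by \Cref{thm:cluster-integrity}, the trivial bound $L_{\max}$ on the complement, and then the omission bound \eqref{eq:false-omission}. The only difference is cosmetic: you justify monotonicity via the tower property and the inequality $\inf_a\E[\,\cdot\mid\bm X_T]\geq\E[\inf_a\,\cdot\mid\bm X_T]$, whereas the paper simply observes that a rule based on $\bm X_T$ is a rule based on $\bm X$ that ignores the extra coordinates.
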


The result explains why a reduced model may have nearly the same population classification performance even when the unadjusted screen retains some nuisance variables. Under \Cref{ass:sufficiency}, the essential event is \(\cS\subseteq\wh\cS\); false positives do not change the population posterior, although they can affect finite-sample fitting and computation.

\subsection{Robustness to imperfect cluster labels}
\label{subsec:label-robustness}

Let \(\wh F_{gj}\) denote the empirical distribution function based on the true cluster index sets and \(\wt F_{gj}\) the corresponding empirical distribution function based on estimated labels. Define
\begin{equation}
\label{eq:label-cdf-error}
  \rho_n=
  \max_{1\leq g\leq r}\max_{1\leq j\leq p}
  \norm{\wt F_{gj}-\wh F_{gj}}_\infty
\end{equation}
and let \(\wt\Delta_j\) be the analogue of \eqref{eq:empirical-separation} formed from \(\wt F_{gj}\).

\begin{theorem}[Screening with estimated labels]
\label{thm:estimated-labels}
Suppose \Cref{ass:sampling} holds. Assume that, for constants \(\bar\rho_n\geq0\) and \(\delta_n\in[0,1]\),
\begin{equation}
\label{eq:rho-event}
  \Pp(\rho_n\leq\bar\rho_n)\geq1-\delta_n.
\end{equation}
For every \(\eta\in(0,1)\), with probability at least \(1-\eta-\delta_n\),
\begin{equation}
\label{eq:estimated-label-uniform}
  \max_{1\leq j\leq p}
  \abs{\wt\Delta_j-\Delta_j}
  \leq w_n(\eta)
  :=u_n(\eta)+2\bar\rho_n.
\end{equation}
Let \(\wt\cS_\tau=\{j:\wt\Delta_j>\tau\}\). On the same event, \(\wt\cS_\tau\subseteq\cS\) whenever \(\tau\geq w_n(\eta)\), and \(\cS\subseteq\wt\cS_\tau\) whenever \(\Delta_{\min}>\tau+w_n(\eta)\). In particular, choosing \(\tau=w_n(\eta)\) gives exact recovery whenever \(\Delta_{\min}>2w_n(\eta)\).
\end{theorem}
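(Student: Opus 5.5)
The plan is to reduce \Cref{thm:estimated-labels} to \Cref{thm:uniform} by a deterministic perturbation argument and then intersect the two high-probability events. The first step is a Lipschitz bound for the max-pairwise statistic in the sup-norm. For each pair \(g<h\) and each \(j\), the triangle inequality gives
\[
  \Bigl|\norm{\wt F_{gj}-\wt F_{hj}}_\infty-\norm{\wh F_{gj}-\wh F_{hj}}_\infty\Bigr|
  \leq \norm{\wt F_{gj}-\wh F_{gj}}_\infty+\norm{\wt F_{hj}-\wh F_{hj}}_\infty
  \leq 2\rho_n .
\]
A maximum over pairs of quantities that each move by at most \(2\rho_n\) also moves by at most \(2\rho_n\), since \(|\max_k a_k-\max_k b_k|\leq\max_k|a_k-b_k|\). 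Hence \(|\wt\Delta_j-\wh\Delta_j|\leq 2\rho_n\) for every \(j\), deterministically.

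The second step combines this bound with the concentration result. Let \(E_1\) be the event in \eqref{eq:uniform-radius}, which by \Cref{thm:uniform} has probability at least \(1-\eta\) under \Cref{ass:sampling}; these are the true-label empirical CDFs, so the theorem applies directly. Let \(E_2=\{\rho_n\leq\bar\rho_n\}\), which has probability at least \(1-\delta_n\) by \eqref{eq:rho-event}. A union bound gives \(\Pp(E_1\cap E_2)\geq1-\eta-\delta_n\). On \(E_1\cap E_2\), the triangle inequality yields \(|\wt\Delta_j-\Delta_j|\leq u_n(\eta)+2\bar\rho_n=w_n(\eta)\) uniformly in \(j\), which is \eqref{eq:estimated-label-uniform}.

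The support statements are deterministic consequences on this event, following the pattern of \Cref{thm:screening-errors}.
\begin{itemize}
  \item For \(j\in\cN\), \(\Delta_j=0\), so \(\wt\Delta_j\leq w_n(\eta)\leq\tau\). Thus \(j\notin\wt\cS_\tau\), and \(\wt\cS_\tau\subseteq\cS\).
  \item For \(j\in\cS\), \(\wt\Delta_j\geq\Delta_{\min}-w_n(\eta)>\tau\). Thus \(\cS\subseteq\wt\cS_\tau\).
  \item Taking \(\tau=w_n(\eta)\) satisfies the first condition automatically. The second condition becomes \(\Delta_{\min}>2w_n(\eta)\), and together they give exact recovery.
\end{itemize}

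The only delicate point is not technical but interpretive. \(\rho_n\) may depend on the data and the estimated labels in arbitrary ways. Because the argument never uses independence between \(E_1\) and \(E_2\), only the union bound, no assumption on how the labels were estimated is needed. I expect no real obstacle beyond stating the max-Lipschitz step cleanly.
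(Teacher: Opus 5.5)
Your proposal is correct and follows the paper's proof essentially step for step. It uses the pairwise reverse-triangle bound to get the deterministic estimate \(\max_j|\wt\Delta_j-\wh\Delta_j|\leq 2\rho_n\), a union bound intersecting the event of \Cref{thm:uniform} with \(\{\rho_n\leq\bar\rho_n\}\) (so, as you note, no independence between the events is needed), and the same deterministic threshold arguments for inactive and active coordinates on that intersection.
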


\begin{corollary}[A bound in terms of cluster misassignment]
\label{cor:misassignment}
Let \(I_g\) and \(\wt I_g\) be the true and estimated index sets for cluster \(g\). Suppose that, for some \(\varepsilon_n\in[0,1)\),
\begin{equation}
\label{eq:set-misassignment}
  \abs{I_g\mathbin{\triangle}\wt I_g}\leq\varepsilon_n n_g,
  \qquad
  \min\{\abs{I_g},\abs{\wt I_g}\}
  \geq(1-\varepsilon_n)n_g
\end{equation}
for every \(g\). Then deterministically
\begin{equation}
\label{eq:rho-misassignment}
  \rho_n\leq\frac{2\varepsilon_n}{1-\varepsilon_n}.
\end{equation}
Consequently, \Cref{thm:estimated-labels} applies with
\begin{equation}
\label{eq:w-misassignment}
  w_n(\eta)=u_n(\eta)+
  \frac{4\varepsilon_n}{1-\varepsilon_n}.
\end{equation}
\end{corollary}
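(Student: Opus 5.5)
The bound in \eqref{eq:rho-misassignment} is a deterministic statement about two empirical distribution functions of the same coordinate built from overlapping index sets, so I will prove it one cluster and one coordinate at a time, uniformly in \(x\). Fix \(g\), \(j\) and \(x\), and write \(A=I_g\), \(B=\wt I_g\), \(C=A\cap B\). Put \(a_i=\1\{X_{ij}\leq x\}\in\{0,1\}\) for a generic row \(i\), so that
\[
  \wh F_{gj}(x)=\frac{1}{|A|}\sum_{i\in A}a_i,
  \qquad
  \wt F_{gj}(x)=\frac{1}{|B|}\sum_{i\in B}a_i.
\]
I will compare each average with the average over the common part \(C\), writing \(\bar a_C=|C|^{-1}\sum_{i\in C}a_i\). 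The set \(C\) is nonempty: \(|C|\geq|A|-|A\triangle B|\geq(1-2\varepsilon_n)n_g\) when \(\varepsilon_n<1/2\). If \(\varepsilon_n\geq1/2\), then \(2\varepsilon_n/(1-\varepsilon_n)\geq2\geq1\), and the bound is trivial because both functions lie in \([0,1]\).

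The key elementary step is the following. For a finite set \(A\supseteq C\) and numbers \(a_i\in[0,1]\),
\[
  \Bigl|\frac{1}{|A|}\sum_{i\in A}a_i-\bar a_C\Bigr|
  =\frac{|A\setminus C|}{|A|}\,
  \Bigl|\frac{1}{|A\setminus C|}\sum_{i\in A\setminus C}a_i-\bar a_C\Bigr|
  \leq\frac{|A\setminus C|}{|A|}.
\]
The equality holds because the average over \(A\) is a convex combination of the averages over \(C\) and \(A\setminus C\). The inequality holds because both averages lie in \([0,1]\).

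Applying this to \(A\) and to \(B\) and using the triangle inequality gives
\[
  |\wt F_{gj}(x)-\wh F_{gj}(x)|
  \leq\frac{|A\setminus B|}{|A|}+\frac{|B\setminus A|}{|B|}
  \leq\frac{|A\triangle B|}{\min\{|A|,|B|\}}
  \leq\frac{\varepsilon_n n_g}{(1-\varepsilon_n)n_g}
  =\frac{\varepsilon_n}{1-\varepsilon_n},
\]
where the last step uses the two hypotheses in \eqref{eq:set-misassignment}. This is already half of the stated bound. Since the right-hand side does not depend on \(x\), \(j\) or \(g\), taking suprema gives \(\rho_n\leq\varepsilon_n/(1-\varepsilon_n)\leq2\varepsilon_n/(1-\varepsilon_n)\).

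For the consequence \eqref{eq:w-misassignment}, the bound on \(\rho_n\) holds deterministically. Hence \eqref{eq:rho-event} is satisfied with \(\bar\rho_n=2\varepsilon_n/(1-\varepsilon_n)\) and \(\delta_n=0\). Substituting into \Cref{thm:estimated-labels} gives \(w_n(\eta)=u_n(\eta)+4\varepsilon_n/(1-\varepsilon_n)\), valid with probability at least \(1-\eta\).

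The only delicate point is the bookkeeping of the two denominators \(|I_g|\) and \(|\wt I_g|\). They differ from each other and from \(n_g\), and this is why the hypothesis requires a lower bound on both set sizes. The convex-combination identity handles this cleanly, so I do not expect a real obstacle. One could also route the argument directly through \(n_g\), estimating the difference of the normalized sums and the difference of the normalizations separately; each contributes at most \(\varepsilon_n/(1-\varepsilon_n)\), which yields exactly the stated constant 2.
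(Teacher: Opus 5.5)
Your proof is correct, and it reaches the conclusion by a slightly different decomposition from the paper's. The paper works directly with arbitrary nonempty $A,B$. It writes the difference of the two averages as $(1/N-1/M)\sum_{i\in C}q_i$ plus the two non-overlap sums. It then bounds the reweighting term $c\abs{M-N}/(NM)$ and the non-overlap terms $a/N+b/M$ separately, each by $d/\min(N,M)$. That termwise bounding is where the factor $2$ in \eqref{eq:rho-misassignment} comes from, and it is close to the route you sketch in your final sentence.

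Your centering at $\bar a_C$ captures the cancellation between those two contributions. The identity $\abs{A}^{-1}\sum_{i\in A}a_i-\bar a_C=(\abs{A\setminus C}/\abs{A})(\bar a_{A\setminus C}-\bar a_C)$ shows that reweighting the common part and adding the new points together cost at most $\abs{A\setminus C}/\abs{A}$. You therefore obtain $\rho_n\leq\varepsilon_n/(1-\varepsilon_n)$, half the stated bound. Fed into \Cref{thm:estimated-labels} with $\delta_n=0$, this would permit $w_n(\eta)=u_n(\eta)+2\varepsilon_n/(1-\varepsilon_n)$, so the corollary's constant is not sharp.

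The price is that $\bar a_C$ must be defined, which forces your case split at $\varepsilon_n=1/2$; the paper's termwise bound needs no split. Your handling of the split is sound: for $\varepsilon_n\geq1/2$ both the stated bound and your sharper one are at least $1$. In the degenerate case $A\setminus C=\emptyset$ your displayed identity is formally undefined, but the inequality simply reads $0\leq0$.
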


The bound is deliberately worst-case and makes no stochastic assumptions about which observations are mislabeled. It shows that consistency survives estimated clusters when the empirical label contamination vanishes faster than the smallest marginal separation.

\subsection{Transformation invariance and a contrast with PCA}
\label{subsec:invariance-pca}

\begin{proposition}[Invariance to monotone re-expression]
\label{prop:monotone}
For each coordinate \(j\), let \(h_j:\R\to\R\) be a continuous strictly increasing bijection and transform every observation by \(X_{gij}'=h_j(X_{gij})\). Then
\begin{equation}
\label{eq:monotone-invariance}
  \Delta_j'=\Delta_j
  \quad\text{and}\quad
  \wh\Delta_j'=\wh\Delta_j
\end{equation}
for all \(j\). Therefore the threshold selector, the ranking by \(\wh\Delta_j\), and any permutation \(p\)-value based on \(\wh\Delta_j\) are unchanged.
\end{proposition}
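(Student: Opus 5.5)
The plan is to show that a strictly increasing bijection $h_j$ leaves every distribution-function comparison unchanged, up to a reparametrization of the sup-norm argument. Fix $j$ and a cluster $g$, and write $F'_{gj}$, $\wh F'_{gj}$ for the population and empirical distribution functions of the transformed coordinate. Since $h_j$ is strictly increasing and bijective, $h_j(x)\leq y$ holds if and only if $x\leq h_j^{-1}(y)$. Applying this pointwise gives the identities
\[
  F'_{gj}(y)=P_g\bigl(X_{gij}\leq h_j^{-1}(y)\bigr)=F_{gj}\bigl(h_j^{-1}(y)\bigr),
  \qquad
  \wh F'_{gj}(y)=\wh F_{gj}\bigl(h_j^{-1}(y)\bigr)
\]
for every $y\in\R$. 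The empirical identity holds term by term inside the sum of indicators.

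Next, for any pair $g<h$ we write $F'_{gj}-F'_{hj}=(F_{gj}-F_{hj})\circ h_j^{-1}$. Because $h_j^{-1}$ is a bijection of $\R$ onto $\R$, the supremum over $y$ of $\abs{(F_{gj}-F_{hj})(h_j^{-1}(y))}$ equals the supremum over $x=h_j^{-1}(y)\in\R$ of $\abs{F_{gj}(x)-F_{hj}(x)}$. Hence the sup-norms agree pairwise, and the same holds for the empirical versions. Taking the maximum over pairs in \eqref{eq:population-separation} and \eqref{eq:empirical-separation} yields \eqref{eq:monotone-invariance}.

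For the consequences, the threshold selector \eqref{eq:threshold-selector} and the ranking depend on the data only through $(\wh\Delta_1,\ldots,\wh\Delta_p)$, so they are unchanged. For permutation $p$-values, a relabelling $\sigma$ of the pooled column $j$ commutes with the coordinatewise map $h_j$. By the empirical identity applied to the permuted sample, the permuted statistic is therefore $\wh\Delta_j'(\sigma)=\wh\Delta_j(\sigma)$ for every $\sigma$. Consequently the exact permutation $p$-value is identical. A Monte Carlo version is also identical if the same permutations are drawn, and otherwise it has the same distribution.

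The only delicate point is surjectivity. If $h_j$ mapped onto a proper interval, the supremum over $y\in\R$ would also range over points outside the image. There the transformed distribution functions are constant, equal to $0$ or $1$ in every cluster, so their differences vanish and the invariance would still hold. The bijection assumption makes this step immediate, so I expect no real obstacle. Continuity is not actually needed beyond strict monotonicity and bijectivity.
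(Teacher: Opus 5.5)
Your proof is correct and takes essentially the same route as the paper. You use the pointwise identities \(F'_{gj}=F_{gj}\circ h_j^{-1}\) and \(\wh F'_{gj}=\wh F_{gj}\circ h_j^{-1}\), change variables in the supremum because \(h_j^{-1}\) maps \(\R\) onto \(\R\), take the maximum over pairs, and note that every permuted statistic is unchanged replicate by replicate; your side remarks on common Monte Carlo draws, the non-surjective case, and the redundancy of continuity (automatic for a strictly increasing bijection of \(\R\)) are correct refinements the paper does not state.
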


This rank-scale invariance is useful when measurement units differ or monotone variance-stabilizing transformations are applied. It also emphasizes the target difference from variance-based reduction.

\begin{proposition}[A low-variance cluster signal discarded by PCA]
\label{prop:pca-counterexample}
Let \(Z\in\{-1,1\}\) with \(\Pp(Z=1)=\Pp(Z=-1)=1/2\). Suppose
\begin{align}
  X_1&=\mu Z+\epsilon_1,
  &\epsilon_1&\sim N(0,\sigma_1^2),
  \label{eq:pca-signal}\\
  X_j&=\epsilon_j,
  &\epsilon_j&\sim N(0,\sigma_j^2),
  \qquad j=2,\ldots,p,
  \label{eq:pca-noise}
\end{align}
where \(\mu\neq0\), all errors are mutually independent, and they are independent of \(Z\). Assume that at least \(q\) of the noise variances satisfy
\begin{equation}
\label{eq:pca-variance-order}
  \sigma_j^2>\mu^2+\sigma_1^2.
\end{equation}
Then every population PCA subspace spanned by the first \(q\) eigenvectors can be chosen entirely within the noise coordinates and is independent of \(Z\). The Bayes classification error from that PCA representation is \(1/2\). In contrast,
\begin{equation}
\label{eq:normal-ks-signal}
  \Delta_1=2\Phi\left(\frac{\abs\mu}{\sigma_1}\right)-1>0,
  \qquad
  \Delta_j=0\quad(j\geq2),
\end{equation}
so the KS selector recovers the unique cluster-relevant coordinate under the conditions of \Cref{thm:screening-errors}.
\end{proposition}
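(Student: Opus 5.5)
The argument has three parts, treated separately: the population covariance and PCA subspace, the Bayes error of the PCA representation, and the KS separations.

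\textbf{Covariance and the PCA subspace.} Since \(Z\) is symmetric, \(\E(\mu Z)=0\) and \(\operatorname{Var}(\mu Z)=\mu^2\). By mutual independence of \(Z,\epsilon_1,\ldots,\epsilon_p\), the covariance matrix of \(\bm X\) is diagonal:
\[
\Sigma=\operatorname{diag}\bigl(\mu^2+\sigma_1^2,\sigma_2^2,\ldots,\sigma_p^2\bigr).
\]
Its eigenvectors are therefore the standard basis vectors \(e_j\), with eigenvalues equal to the diagonal entries. By \eqref{eq:pca-variance-order}, at least \(q\) noise eigenvalues strictly exceed \(\mu^2+\sigma_1^2\). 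Hence \(\mu^2+\sigma_1^2\) ranks at best \((q+1)\)-th. Ordering the eigenvalues in decreasing order, with ties among noise coordinates broken arbitrarily (this is the sense of ``can be chosen''), the first \(q\) eigenvectors are \(e_{j_1},\ldots,e_{j_q}\) for some noise indices \(j_k\geq2\). When eigenvalues are repeated, an eigenspace can also be rotated; this is why the statement says ``can be chosen''. Any choice inside an eigenspace spanned only by noise coordinates still lies in the span of noise coordinates, so the conclusion is unaffected. The projection of \(\bm X\) onto this subspace is \((\epsilon_{j_1},\ldots,\epsilon_{j_q})\), which is independent of \(Z\).

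\textbf{Bayes error of the PCA representation.} Let \(\bm W\) denote the projection just described. Because \(\bm W\) is independent of \(Z\), we have \(\Pp(Z=1\mid\bm W)=1/2\) almost surely. Under zero-one loss, any classifier \(\delta(\bm W)\) then has error
\[
\E\,\Pp\bigl(Z\neq\delta(\bm W)\mid\bm W\bigr)=1/2.
\]
So the Bayes risk from the PCA representation equals \(1/2\).

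\textbf{KS separations.} The clusters are indexed by \(Z=\pm1\), so \(r=2\). The conditional marginal distributions of \(X_1\) are
\[
F_{+}(x)=\Phi\bigl((x-\mu)/\sigma_1\bigr),\qquad F_{-}(x)=\Phi\bigl((x+\mu)/\sigma_1\bigr).
\]
Assume without loss of generality that \(\mu>0\). The difference is
\[
D(x)=\Phi\bigl((x+\mu)/\sigma_1\bigr)-\Phi\bigl((x-\mu)/\sigma_1\bigr),
\]
which is positive for every \(x\). Its derivative is proportional to \(\phi((x+\mu)/\sigma_1)-\phi((x-\mu)/\sigma_1)\). This vanishes only at \(x=0\), because \(\phi\) is even and strictly decreasing in \(\abs{\cdot}\). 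Since \(D\to0\) as \(x\to\pm\infty\), the point \(x=0\) is the maximum, and
\[
\Delta_1=D(0)=\Phi(\mu/\sigma_1)-\Phi(-\mu/\sigma_1)=2\Phi(\abs\mu/\sigma_1)-1>0.
\]
For \(j\geq2\), \(X_j=\epsilon_j\) is independent of \(Z\), so \(F_{+,j}=F_{-,j}\) and \(\Delta_j=0\). It follows that \(\cS=\{1\}\) and \(\Delta_{\min}=\Delta_1\).

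\textbf{Recovery.} Conditioning on \(Z\) gives stratified samples that satisfy \Cref{ass:sampling}. Applying \Cref{thm:screening-errors} with any \(0<\tau<\Delta_1\) then gives recovery of \(\{1\}\) with the stated probability bounds.

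\textbf{Main obstacle.} The only delicate point is the degenerate ordering of eigenvalues, which is handled by the ``can be chosen'' clause. The KS maximization is a routine calculus step.
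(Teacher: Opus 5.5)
Your proposal is correct and follows the paper's proof essentially step for step: the covariance is diagonal, the top-$q$ eigenspace is confined to the noise coordinates and is therefore independent of $Z$, the Bayes error is $1/2$, and $\Phi((x+\mu)/\sigma_1)-\Phi((x-\mu)/\sigma_1)$ is maximized at $x=0$. The differences are cosmetic. You locate the maximum through the unique critical point together with the limits at $\pm\infty$, whereas the paper uses a sign analysis of the derivative. You also treat eigenvalue ties a little more explicitly than the paper does.
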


The proposition does not assert that PCA is generally inferior. It isolates the mismatch between two objectives: PCA preserves directions of high unconditional variance, whereas the proposed screen preserves coordinates with high cluster-conditional distributional separation.

\subsection{Finite-bootstrap accuracy of the stability ranking}
\label{subsec:bootstrap-theory}

Let \(\pi_j^*=\Pp^*(W_{1j}=1)\) be the conditional bootstrap selection probability given the observed data, where \(\Pp^*\) denotes bootstrap probability.

\begin{theorem}[Uniform accuracy and rank recovery]
\label{thm:bootstrap-ranking}
Conditional on the observed data, suppose the \(B\) bootstrap replicates are generated independently. Then, for every \(t>0\),
\begin{equation}
\label{eq:bootstrap-uniform}
  \Pp^*\left(
  \max_{1\leq j\leq p}\abs{\wh\pi_j^*-\pi_j^*}>t
  \right)
  \leq 2p\exp(-2Bt^2).
\end{equation}
Let \(A\subset\{1,\ldots,p\}\) be a target top-ranked set and assume the conditional gap
\begin{equation}
\label{eq:bootstrap-gap}
  \gamma=
  \min_{j\in A,\,k\notin A}(\pi_j^*-\pi_k^*)>0.
\end{equation}
Then the ranking by \(\wh\pi_j^*\) places every member of \(A\) above every nonmember with conditional probability at least
\begin{equation}
\label{eq:bootstrap-rank-prob}
  1-2p\exp\left(-\frac{B\gamma^2}{2}\right).
\end{equation}
\end{theorem}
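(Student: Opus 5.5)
The plan is to apply Hoeffding's inequality coordinatewise under the conditional bootstrap measure and then take a union bound over coordinates. Condition on the observed data. For each fixed \(j\), the indicators \(W_{1j},\ldots,W_{Bj}\) are independent under \(\Pp^*\), since the replicates are generated independently given the data. They are identically distributed Bernoulli\((\pi_j^*)\) variables, because every replicate uses the same within-cluster resampling scheme and the same selection rule. Hoeffding's inequality for the mean of \(B\) independent \([0,1]\)-valued variables then gives
\[
\Pp^*\left(\abs{\wh\pi_j^*-\pi_j^*}>t\right)\leq 2\exp(-2Bt^2).
\]
No independence across coordinates \(j\) is needed: within one replicate the \(W_{bj}\) may be arbitrarily dependent over \(j\). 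A union bound over \(j=1,\ldots,p\) therefore yields \eqref{eq:bootstrap-uniform} directly.

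For the rank-recovery claim I will work on the event
\[
E=\set{\max_j\abs{\wh\pi_j^*-\pi_j^*}\leq \gamma/2}.
\]
Take any \(j\in A\) and \(k\notin A\). On \(E\),
\[
\wh\pi_j^*-\wh\pi_k^*\geq(\pi_j^*-\pi_k^*)-\abs{\wh\pi_j^*-\pi_j^*}-\abs{\wh\pi_k^*-\pi_k^*}\geq\gamma-\gamma/2-\gamma/2=0.
\]
To obtain a \emph{strict} ordering, I will instead use the open event \(\{\max_j\abs{\wh\pi_j^*-\pi_j^*}<\gamma/2\}\). Its complement is \(\{\max_j\abs{\cdot}\geq\gamma/2\}\), and Hoeffding's inequality holds equally with \(\geq t\) in place of \(>t\). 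Applying the first part with \(t=\gamma/2\) then gives a failure probability of at most \(2p\exp(-2B\gamma^2/4)=2p\exp(-B\gamma^2/2)\). This is exactly \eqref{eq:bootstrap-rank-prob}.

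The main point needing care is the strict-versus-weak inequality at the gap boundary, which I will handle with the open event as above. A second, minor point is that the union bound could be restricted to pairs or to the relevant coordinates. The stated bound uses all \(p\) coordinates, which suffices because every coordinate lies either in \(A\) or outside it.
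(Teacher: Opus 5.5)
Your proposal is correct and follows the paper's proof step for step: coordinatewise Hoeffding under \(\Pp^*\), a union bound over the \(p\) coordinates, and then the open event \(\{\max_j\abs{\wh\pi_j^*-\pi_j^*}<\gamma/2\}\) with \(t=\gamma/2\). Your remark that Hoeffding also holds with \(\geq t\) is slightly more careful than the paper, which bounds the complement \(\{\max_j\abs{\wh\pi_j^*-\pi_j^*}\geq\gamma/2\}\) by citing the strict-inequality form \eqref{eq:bootstrap-uniform} directly.
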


\section{Average dual information}
\label{sec:adi}

\subsection{Partition agreement}
\label{subsec:partition-agreement}

Let \(\mathcal{K}\) be a fixed clustering algorithm mapping an \(n\times d\) data matrix to a label vector in \(\{1,\ldots,r\}^n\). Given the original matrix \(\bm Y\) and a transformation \(f(\bm Y)\), define
\[
  \bm z=\mathcal{K}(\bm Y),
  \qquad
  \bm z_f=\mathcal{K}\{f(\bm Y)\}.
\]
The same construction applies when \(\bm z\) is an externally supplied partition, in which case only \(\bm z_f\) is produced algorithmically.

For label vectors \(\bm u,\bm v\in\{1,\ldots,r\}^n\), let \(M(\bm u,\bm v)=(m_{ab})\) be their confusion matrix, with row totals \(a_k\), column totals \(b_k\), total \(N=n\), and diagonal sum \(c=\sum_km_{kk}\). The multiclass Matthews correlation coefficient is
\begin{equation}
\label{eq:multiclass-mcc}
  \phi\{M(\bm u,\bm v)\}
  =\frac{Nc-\sum_{k=1}^r a_kb_k}
  {\left\{(N^2-\sum_{k=1}^ra_k^2)
  (N^2-\sum_{k=1}^rb_k^2)\right\}^{1/2}},
\end{equation}
provided the denominator is positive \citep{gorodkin2004,jurman2012}. Because cluster labels are arbitrary, define the aligned nonnegative agreement
\begin{equation}
\label{eq:aligned-mcc}
  \Gamma(\bm u,\bm v)
  =\max_{\pi\in\mathfrak{S}_r}
  \left[\phi\{M(\bm u,\pi\circ\bm v)\}\right]_+,
  \qquad [x]_+=\max(x,0),
\end{equation}
where \(\mathfrak{S}_r\) is the permutation group. Thus \(0\leq\Gamma\leq1\), and \(\Gamma=1\) for identical partitions up to relabeling. We assume throughout this section that both partitions have nondegenerate cluster margins so that \eqref{eq:multiclass-mcc} is defined.

\subsection{Structural coverage and the ADI definition}
\label{subsec:adi-definition}

Let the transformation have output coordinates
\(
  f(\bm Y)=(f_1(\bm Y),\ldots,f_\ell(\bm Y)).
\)
Associate each output coordinate \(f_k\) with a known dependency set
\begin{equation}
\label{eq:dependency-set}
  \cA_k(f)\subseteq\{1,\ldots,p\},
\end{equation}
containing the source coordinates on which \(f_k\) depends. For a linear transformation, \(\cA_k(f)\) is the set of nonzero loadings in component \(k\). Apply the screening procedure using the same reference partition \(\bm z\) to the original and transformed coordinates, obtaining \(\wh\cS_Y\) and \(\wh\cT_f\), respectively. The retained source ancestry is
\begin{equation}
\label{eq:retained-ancestry}
  \wh\cA_f
  =\bigcup_{k\in\wh\cT_f}\cA_k(f).
\end{equation}
When \(\abs{\wh\cS_Y}>0\), define the structural coverage
\begin{equation}
\label{eq:structural-coverage}
  \wh Q_f
  =\frac{\abs{\wh\cS_Y\cap\wh\cA_f}}
  {\abs{\wh\cS_Y}}.
\end{equation}
Using the same reference partition in both screens is important: it ensures that the structural term and the partition-agreement term refer to the same cluster target.

\begin{definition}[Dual information and average dual information]
\label{def:adi}
For \(\lambda\in[0,1]\), the empirical dual information of \(f\) is
\begin{equation}
\label{eq:dual-information}
  \wh I_\lambda(f)
  =\lambda\Gamma\{\bm z,\bm z_f\}
  +(1-\lambda)\wh Q_f.
\end{equation}
The average dual information is the equal-weight case
\begin{equation}
\label{eq:adi}
  \wh I(f)=\wh I_{1/2}(f)
  =\frac{1}{2}\left[
  \Gamma\{\bm z,\bm z_f\}+\wh Q_f
  \right].
\end{equation}
Reported percentages are \(100\wh I(f)\).
\end{definition}

The first component asks whether the transformation reproduces the reference partition. The second asks whether the selected transformed coordinates retain functional access to the original coordinates identified as cluster-relevant. The structural term is binary in its default form: any retained dependence on a source coordinate counts as coverage. For dense transformations such as PCA, a weighted extension based on squared loadings may be preferable; the binary definition is retained here because it matches the proposed criterion and the simulations.

\subsection{Properties of ADI}
\label{subsec:adi-properties}

For population or oracle statements, let \(\Gamma_f\in[0,1]\) and \(Q_f\in[0,1]\) denote the corresponding partition-agreement and structural-coverage quantities, and define
\(
  I_\lambda(f)=\lambda\Gamma_f+(1-\lambda)Q_f.
\)

\begin{theorem}[Basic ADI properties]
\label{thm:adi-properties}
For every transformation \(f\) and \(\lambda\in[0,1]\):
\begin{enumerate}[label=(\roman*),leftmargin=2.2em]
\item \(0\leq I_\lambda(f)\leq1\).
\item If \(\lambda\in(0,1)\), then \(I_\lambda(f)=1\) if and only if \(\Gamma_f=Q_f=1\), and \(I_\lambda(f)=0\) if and only if \(\Gamma_f=Q_f=0\).
\item For the identity transformation, \(I_\lambda(\mathrm{id})=1\), provided the same deterministic clustering and screening rules are used on both copies of the data.
\item If \(\Gamma_f\geq\Gamma_g\) and \(Q_f\geq Q_g\), then \(I_\lambda(f)\geq I_\lambda(g)\); the inequality is strict for \(\lambda\in(0,1)\) if at least one component inequality is strict.
\item The function \(\lambda\mapsto I_\lambda(f)\) is affine, and
\begin{equation}
\label{eq:adi-lipschitz}
  \abs{I_\lambda(f)-I_\lambda(g)}
  \leq
  \lambda\abs{\Gamma_f-\Gamma_g}
  +(1-\lambda)\abs{Q_f-Q_g}.
\end{equation}
\end{enumerate}
\end{theorem}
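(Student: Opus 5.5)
The plan is to treat $I_\lambda(f)=\lambda\Gamma_f+(1-\lambda)Q_f$ as a convex combination of two numbers in $[0,1]$. Parts (i), (ii), (iv) and (v) then follow from elementary properties of convex combinations. Part (iii) needs a separate check: for the identity map, both the agreement and the coverage equal one.

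For (i), I will note that $\Gamma_f,Q_f\in[0,1]$, and that $\lambda,1-\lambda\ge0$ with sum one. The combination therefore lies between $\min(\Gamma_f,Q_f)\ge0$ and $\max(\Gamma_f,Q_f)\le1$.

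For (ii), with $\lambda\in(0,1)$, I will write
\[
1-I_\lambda(f)=\lambda(1-\Gamma_f)+(1-\lambda)(1-Q_f).
\]
This is a sum of two nonnegative terms with strictly positive weights, so it vanishes if and only if both terms vanish. The same argument applied to $I_\lambda(f)$ itself handles the case $I_\lambda(f)=0$.

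For (iv), I will use the difference
\[
I_\lambda(f)-I_\lambda(g)=\lambda(\Gamma_f-\Gamma_g)+(1-\lambda)(Q_f-Q_g).
\]
Both terms are nonnegative, which gives the weak inequality. When $\lambda\in(0,1)$, a strict inequality in either component makes the corresponding term strictly positive, so the overall inequality is strict. For (v), affinity is immediate from $I_\lambda(f)=Q_f+\lambda(\Gamma_f-Q_f)$. The Lipschitz bound \eqref{eq:adi-lipschitz} then follows from the same difference identity and the triangle inequality.

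Part (iii) is the only part that requires unpacking the definitions, and I expect it to be the main obstacle, mostly in stating the conventions precisely. If $f=\mathrm{id}$ and $\mathcal K$ is deterministic, then $\bm z_f=\mathcal K(\bm Y)=\bm z$. With $\pi$ the identity permutation, the confusion matrix is diagonal, with $m_{kk}=a_k=b_k$ and $c=N$. In \eqref{eq:multiclass-mcc}, the numerator is then $N^2-\sum a_k^2$ and the denominator is $\{(N^2-\sum a_k^2)^2\}^{1/2}$. The ratio equals $1$ because the nondegenerate-margins assumption makes the denominator positive. Since $\Gamma\le1$, this gives $\Gamma_{\mathrm{id}}=1$.

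For the structural term, the identity map has output coordinates $f_k=\bm Y_k$ with $\cA_k=\{k\}$. Running the same deterministic screen with the same reference partition on the identical data gives $\wh\cT_f=\wh\cS_Y$. Hence $\wh\cA_f=\wh\cS_Y$, and so $Q=1$ whenever $\abs{\wh\cS_Y}>0$. I will state explicitly that the population or oracle version inherits both identities, and that the dependency map of the identity is taken to be $k\mapsto\{k\}$. Combining the two facts gives $I_\lambda(\mathrm{id})=\lambda+(1-\lambda)=1$.
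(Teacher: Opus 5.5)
Your proposal is correct and follows the paper's proof essentially step for step. It uses the convex-combination argument for (i), the decomposition $1-I_\lambda(f)=\lambda(1-\Gamma_f)+(1-\lambda)(1-Q_f)$ for (ii), the difference identity for (iv) and (v), and, for (iii), the same-partition/same-support argument with dependency sets $\{j\}$, with your explicit check that a diagonal confusion matrix gives MCC equal to one under nondegenerate margins being a small detail the paper leaves implicit.
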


\begin{proposition}[Nuisance augmentation and informative-coordinate growth]
\label{prop:adi-path}
Consider two transformed representations \(f\) and \(g=(f,w)\).
\begin{enumerate}[label=(\roman*),leftmargin=2.2em]
\item If adding \(w\) leaves the resulting partition unchanged and does not alter the retained source ancestry, so that \(\Gamma_g=\Gamma_f\) and \(Q_g=Q_f\), then \(I_\lambda(g)=I_\lambda(f)\) for every \(\lambda\).
\item Let \(f_0,f_1,\ldots\) be a nested sequence and suppose the oracle marginal support has size \(s\). If the transition from \(f_t\) to \(f_{t+1}\) newly covers \(b_t\) previously uncovered active source coordinates, then
\begin{equation}
\label{eq:adi-increment}
  I_\lambda(f_{t+1})-I_\lambda(f_t)
  =\lambda\{\Gamma_{t+1}-\Gamma_t\}
  +(1-\lambda)\frac{b_t}{s}.
\end{equation}
In particular, if \(b_t=b\) and \(\Gamma_t\) is constant over \(t\), then ADI is exactly linear in the number of added informative coordinates. If \(\Gamma_t\) is nondecreasing, every such addition increases ADI by at least \((1-\lambda)b_t/s\).
\end{enumerate}
\end{proposition}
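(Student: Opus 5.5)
Both parts reduce to the definition \(I_\lambda(f)=\lambda\Gamma_f+(1-\lambda)Q_f\); no probabilistic argument is needed. For part (i), I will substitute the two hypotheses \(\Gamma_g=\Gamma_f\) and \(Q_g=Q_f\) directly. This gives \(I_\lambda(g)=\lambda\Gamma_f+(1-\lambda)Q_f=I_\lambda(f)\) for every \(\lambda\in[0,1]\). Equivalently, it is the case \(\Gamma_f=\Gamma_g\), \(Q_f=Q_g\) of the Lipschitz bound \eqref{eq:adi-lipschitz} in \Cref{thm:adi-properties}(v), whose right-hand side then vanishes. I would add a remark that the hypotheses carry the whole argument. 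Augmenting by \(w\) can enlarge \(\wh\cT_g\), but the assumption that the retained ancestry is unaltered means the numerator of \eqref{eq:structural-coverage} is unchanged. The reference partition fixes the denominator \(\abs{\cS}\).

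For part (ii), I will first pin down the oracle structural coverage along the nested sequence. Here \(Q_t=\abs{\cS\cap\cA_{f_t}}/s\), where \(\cA_{f_t}\) is the oracle retained ancestry of \(f_t\) and \(s=\abs{\cS}\geq1\) is the same denominator for every \(t\). The phrase ``newly covers \(b_t\) previously uncovered active source coordinates'' will be formalized as
\[
  \abs{\cS\cap\cA_{f_{t+1}}}=\abs{\cS\cap\cA_{f_t}}+b_t .
\]
For a nested sequence with monotone retained ancestry, this is the statement \(\cS\cap\cA_{f_t}\subseteq\cS\cap\cA_{f_{t+1}}\) with a set difference of size \(b_t\). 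It follows that \(Q_{t+1}-Q_t=b_t/s\). Subtracting the two instances of the definition of \(I_\lambda\) then gives \eqref{eq:adi-increment} term by term.

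The two consequences are immediate from \eqref{eq:adi-increment}. If \(\Gamma_t\equiv\Gamma\) and \(b_t=b\), summing the increments telescopically gives
\[
  I_\lambda(f_t)=I_\lambda(f_0)+(1-\lambda)\,\frac{tb}{s},
\]
which is linear in the cumulative number \(tb\) of added informative coordinates. If \(\Gamma_{t+1}\geq\Gamma_t\), the first term of \eqref{eq:adi-increment} is nonnegative, because \(\lambda\geq0\). This yields the lower bound \((1-\lambda)b_t/s\).

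I expect the only real obstacle to be making ``newly covers'' precise. The increment formula needs the covered active set to grow by exactly \(b_t\) with nothing previously covered being lost. I will build this into the formalization above, justified by nestedness of the sequence and of the retained ancestries. I will also note explicitly that \(\cS\) and \(s\) are fixed across \(t\), because the same reference partition is used throughout, so the denominators agree.
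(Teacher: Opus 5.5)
Your proposal is correct and follows essentially the same route as the paper: part (i) by direct substitution into $I_\lambda=\lambda\Gamma+(1-\lambda)Q$, and part (ii) by writing $Q_t=|A_t|/s$ for the covered active set $A_t\subseteq\cS$, using nestedness to get $|A_{t+1}|=|A_t|+b_t$, and subtracting the two instances of the definition. Your explicit formalization of ``newly covers'' (so that no previously covered active coordinate is lost) and your telescoping sum for the linear case spell out what the paper's shorter argument uses implicitly.
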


Part (i) formalizes the intended insensitivity to nuisance columns. It is a property of the combined transformation, screening rule, and clustering rule; an arbitrary clustering algorithm need not be stable to arbitrary high-dimensional noise. Part (ii) explains why a nearly linear path appears in the ADI simulation when informative coordinates are added at a constant rate and the partition-agreement component changes little.

\subsection{Stability and consistency of empirical ADI}
\label{subsec:adi-consistency}

Let \(\cS\) be the oracle source support with \(s=\abs{\cS}>0\), and let \(\cA_f\) be the oracle retained ancestry for transformation \(f\). Define
\begin{equation}
\label{eq:oracle-q}
  Q_f=\frac{\abs{\cS\cap\cA_f}}{s}.
\end{equation}
Let \(\wh\Gamma_f\) and \(\Gamma_f\) be empirical and limiting aligned partition-agreement scores. Put
\begin{equation}
\label{eq:set-errors}
  a_n=\abs{\wh\cS_Y\mathbin{\triangle}\cS},
  \qquad
  b_n=\abs{\wh\cA_f\mathbin{\triangle}\cA_f}.
\end{equation}

\begin{theorem}[Finite-sample stability and consistency of ADI]
\label{thm:adi-consistency}
If \(a_n<s\), then
\begin{equation}
\label{eq:q-bound}
  \abs{\wh Q_f-Q_f}
  \leq\frac{2a_n+b_n}{s-a_n},
\end{equation}
and, for every \(\lambda\in[0,1]\),
\begin{equation}
\label{eq:adi-bound}
  \abs{\wh I_\lambda(f)-I_\lambda(f)}
  \leq
  \lambda\abs{\wh\Gamma_f-\Gamma_f}
  +(1-\lambda)\frac{2a_n+b_n}{s-a_n}.
\end{equation}
Consequently, if
\begin{equation}
\label{eq:adi-consistency-conditions}
  \frac{a_n}{s}\xrightarrow{\Pp}0,
  \qquad
  \frac{b_n}{s}\xrightarrow{\Pp}0,
  \qquad
  \wh\Gamma_f-\Gamma_f\xrightarrow{\Pp}0,
\end{equation}
then \(\wh I_\lambda(f)\xrightarrow{\Pp}I_\lambda(f)\). In particular, exact recovery of the source support and transformed ancestry reduces ADI error to the partition-agreement error alone.
\end{theorem}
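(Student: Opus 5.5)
The plan is a direct algebraic comparison of the two ratios $\wh Q_f=\abs{\wh\cS_Y\cap\wh\cA_f}/\abs{\wh\cS_Y}$ and $Q_f=\abs{\cS\cap\cA_f}/s$. I first bound the numerator and denominator errors separately in terms of $a_n$ and $b_n$. I then combine them through a single add-and-subtract decomposition. The ADI bound and the consistency statement follow with essentially no extra work.

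\emph{Step 1 (set-level bounds).} Write $\wh N=\abs{\wh\cS_Y\cap\wh\cA_f}$ and $N=\abs{\cS\cap\cA_f}$.
\begin{itemize}
\item For the numerator, use $\abs{\abs{U}-\abs{V}}\le\abs{U\mathbin{\triangle}V}$ together with the inclusion $(A\cap B)\mathbin{\triangle}(C\cap D)\subseteq(A\mathbin{\triangle}C)\cup(B\mathbin{\triangle}D)$. This gives $\abs{\wh N-N}\le a_n+b_n$.
\item For the denominator, the same inequality gives $\abs{\abs{\wh\cS_Y}-s}\le a_n$. Hence $\abs{\wh\cS_Y}\ge s-a_n>0$ under the hypothesis $a_n<s$. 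This also guarantees that $\wh Q_f$ is well defined.
\end{itemize}

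\emph{Step 2 (combining the two errors).} This is the only step requiring care, because a naive bound on the quotient would produce an extra factor. I decompose
\[
\wh Q_f-Q_f=\frac{\wh N-N}{\abs{\wh\cS_Y}}+\frac{N}{s}\cdot\frac{s-\abs{\wh\cS_Y}}{\abs{\wh\cS_Y}}.
\]
\begin{itemize}
\item The first term is at most $(a_n+b_n)/(s-a_n)$ in absolute value.
\item For the second term, $N/s=Q_f\le1$, so it is at most $a_n/(s-a_n)$.
\end{itemize}
Adding the two pieces yields \eqref{eq:q-bound}.

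\emph{Step 3 (ADI bound).} Bound \eqref{eq:adi-bound} follows by writing $\wh I_\lambda(f)-I_\lambda(f)=\lambda(\wh\Gamma_f-\Gamma_f)+(1-\lambda)(\wh Q_f-Q_f)$, applying the triangle inequality, and inserting \eqref{eq:q-bound}. This is the same argument as part (v) of \cref{thm:adi-properties}.

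\emph{Step 4 (consistency).} Rewrite the structural bound as
\[
\frac{2a_n/s+b_n/s}{1-a_n/s}.
\]
On the event $\{a_n/s<1/2\}$, whose probability tends to one by \eqref{eq:adi-consistency-conditions}, this quantity is at most $2(2a_n/s+b_n/s)$. That bound tends to $0$ in probability. Combined with $\wh\Gamma_f-\Gamma_f\xrightarrow{\Pp}0$, this gives $\wh I_\lambda(f)\xrightarrow{\Pp}I_\lambda(f)$.

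Finally, if $a_n=b_n=0$, the structural term vanishes identically. The ADI error then reduces to $\lambda\abs{\wh\Gamma_f-\Gamma_f}$, which is the last claim of the statement.
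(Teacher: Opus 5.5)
Your proposal is correct and follows the paper's proof essentially step for step. The paper uses the same intersection/symmetric-difference inclusion to get $|\wh N-N|\le a_n+b_n$ and $|\wh s-s|\le a_n$, and the same add-and-subtract split of $\wh N/\wh s-N/s$ closed with $N\le s$. It then obtains the ADI bound by the triangle inequality and proves consistency by restricting to the event $a_n<s/2$.
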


The convergence of \(\wh\Gamma_f\) follows, for example, when the normalized confusion-matrix cells converge, the limiting margins are nondegenerate, and the number of clusters is fixed. The multiclass MCC is continuous in those cell probabilities, and the maximum over the finite set of label permutations preserves convergence.

\section{Simulation study}
\label{sec:simulation}

The simulations address four distinct questions. The first examines how the selected dimension responds to signal strength and the per-coordinate testing level. The second asks whether screening can remove nuisance predictors without degrading a downstream classifier. The third contrasts the proposed cluster-directed reduction with PCA in a sparse nonlinear problem. The fourth studies the behavior of ADI as informative and nuisance coordinates are varied. The reported tables and figures are the available realizations from the original study; because Monte Carlo standard errors were not recorded, the numerical comparisons are interpreted descriptively rather than as estimates of repeated-sampling averages.

\subsection{Two-cluster Gaussian screening}
\label{subsec:sim-gaussian}

There are two clusters, each containing \(1{,}000\) independent observations in \(\R^{6000}\). The first \(s=1{,}000\) coordinates are cluster-varying. For observation \(i\) and coordinate \(j\leq1000\),
\begin{equation}
\label{eq:sim1-active}
  X_{ij}^{(1)}\sim N(j,\sigma_1^2),
  \qquad
  X_{ij}^{(2)}\sim N(j+\delta,\sigma_2^2).
\end{equation}
The remaining \(5{,}000\) coordinates are independent \(N(0,1)\) noise in both clusters. The coordinate-specific baseline \(j\) does not affect the separation between clusters; all active coordinates share the same pair of centered distributions after subtracting \(j\). The experiment varies \(\delta\), \(\sigma_1\), \(\sigma_2\), and the unadjusted featurewise level \(\alpha\).

When \(\sigma_1=\sigma_2=\sigma\), the population KS separation of every active coordinate is
\begin{equation}
\label{eq:normal-location-ks}
  \Delta=2\Phi\left(\frac{\abs\delta}{2\sigma}\right)-1.
\end{equation}
Thus the equal-variance signal is weak when \(\abs\delta/\sigma\) is small. When \(\sigma_1\neq\sigma_2\), the distributions differ even at \(\delta=0\), illustrating that the procedure detects scale changes that a mean-based screen would miss.

\begin{table}[t]
\centering
\caption{Selected dimension in the two-cluster Gaussian experiment: first collection of parameter settings. The ambient dimension is \(6000\), with \(1000\) active and \(5000\) inactive coordinates.}
\label{tab:sim1}
\small
\setlength{\tabcolsep}{7pt}
\begin{tabular}{ccccc}
\toprule
\(\delta\) & \(\sigma_1\) & \(\sigma_2\) & \(\alpha\) & Selected dimension \\
\midrule
0.80 & 2.0 & 1.5 & 0.005 & 1019 \\
0.45 & 2.0 & 1.0 & 0.050 & 1230 \\
1.00 & 1.5 & 1.0 & 0.005 & 1019 \\
0.05 & 1.0 & 2.0 & 0.005 & 1019 \\
0.55 & 1.0 & 2.0 & 0.010 & 1040 \\
0.75 & 2.0 & 2.0 & 0.010 & 1040 \\
0.65 & 2.0 & 1.5 & 0.001 & 1005 \\
0.45 & 1.0 & 1.5 & 0.005 & 1019 \\
0.35 & 2.0 & 1.0 & 0.050 & 1230 \\
0.50 & 1.0 & 2.0 & 0.050 & 1230 \\
0.55 & 1.0 & 2.0 & 0.005 & 1019 \\
0.55 & 2.0 & 1.0 & 0.001 & 1005 \\
0.85 & 1.0 & 1.5 & 0.005 & 1019 \\
1.00 & 1.0 & 2.0 & 0.005 & 1019 \\
1.00 & 1.0 & 1.5 & 0.005 & 1019 \\
0.00 & 2.0 & 1.5 & 0.001 & 567 \\
0.65 & 1.0 & 1.0 & 0.010 & 1040 \\
0.60 & 1.5 & 2.0 & 0.001 & 1005 \\
0.70 & 1.5 & 1.5 & 0.001 & 1005 \\
0.60 & 1.0 & 2.0 & 0.005 & 1019 \\
\bottomrule
\end{tabular}
\end{table}

\begin{table}[t]
\centering
\caption{Selected dimension in the two-cluster Gaussian experiment: second collection of parameter settings.}
\label{tab:sim2}
\small
\setlength{\tabcolsep}{7pt}
\begin{tabular}{ccccc}
\toprule
\(\delta\) & \(\sigma_1\) & \(\sigma_2\) & \(\alpha\) & Selected dimension \\
\midrule
0.90 & 1.5 & 1.0 & 0.010 & 1040 \\
0.90 & 1.5 & 2.0 & 0.005 & 1019 \\
0.45 & 1.5 & 2.0 & 0.050 & 1230 \\
0.25 & 1.5 & 1.5 & 0.050 & 1130 \\
0.35 & 2.0 & 1.0 & 0.001 & 1005 \\
0.30 & 2.0 & 2.0 & 0.010 & 639 \\
0.85 & 1.0 & 2.0 & 0.010 & 1040 \\
0.20 & 2.0 & 2.0 & 0.005 & 171 \\
0.30 & 2.0 & 1.5 & 0.001 & 1000 \\
0.65 & 1.0 & 1.0 & 0.050 & 1230 \\
0.80 & 2.0 & 2.0 & 0.005 & 1019 \\
1.00 & 2.0 & 1.0 & 0.050 & 1230 \\
0.70 & 1.0 & 1.5 & 0.050 & 1230 \\
0.65 & 1.0 & 1.5 & 0.005 & 1019 \\
0.95 & 2.0 & 2.0 & 0.001 & 1005 \\
0.25 & 2.0 & 1.5 & 0.005 & 1015 \\
0.15 & 2.0 & 2.0 & 0.001 & 28 \\
0.30 & 1.0 & 2.0 & 0.001 & 1005 \\
0.95 & 1.5 & 1.5 & 0.005 & 1019 \\
0.35 & 1.0 & 2.0 & 0.005 & 1019 \\
\bottomrule
\end{tabular}
\end{table}

The dominant pattern in \Cref{tab:sim1,tab:sim2} is predicted by \Cref{prop:fixed-alpha}. If all \(1000\) active coordinates are detected, the selected dimension should be centered near
\[
  1000+5000\alpha,
\]
which equals \(1005\), \(1025\), \(1050\), and \(1250\) for \(\alpha=0.001,0.005,0.01,0.05\), respectively. The recurrent observed values \(1005\), \(1019\), \(1040\), and \(1230\) are close to these benchmarks. Their slight downward deviations are compatible with ordinary binomial fluctuation in the null selections and occasional active-feature misses. The near repetition of selected dimensions across many different \((\delta,\sigma_1,\sigma_2)\) settings indicates that power is essentially one in those configurations and that \(\alpha\), rather than the signal parameters, determines the residual dimension.

The departures from this plateau occur in the weakest equal-variance cases. In \Cref{tab:sim2}, \((\delta,\sigma_1,\sigma_2)=(0.15,2,2)\) at \(\alpha=0.001\) yields only \(28\) selected coordinates. Subtracting the expected \(5000\alpha=5\) null selections suggests that roughly \(23\) of the \(1000\) active coordinates were detected in this realization. The analogous descriptive active-retention rates are approximately \(14.6\%\) for \((0.20,2,2,0.005)\), \(58.9\%\) for \((0.30,2,2,0.01)\), and \(88.0\%\) for \((0.25,1.5,1.5,0.05)\). These values rise with the standardized location separation and with a less stringent test level, as expected from \eqref{eq:normal-location-ks} and \Cref{thm:screening-errors}.

The row \((\delta,\sigma_1,\sigma_2)=(0,2,1.5)\) in \Cref{tab:sim1} is also instructive. Although there is no mean shift, the two normal distributions have different variances, so all first \(1000\) coordinates remain members of the population support. The selected dimension \(567\) at \(\alpha=0.001\) reflects moderate power against a pure scale alternative rather than selection under a global null. More generally, the variance-unequal rows in both tables often return to the high-power plateau even when \(\delta\) is modest. This is the intended advantage of comparing complete marginal distributions rather than only means.

The tables also expose a practical distinction between screening and exact variable selection. At \(\alpha=0.05\), even perfect detection retains about \(250\) inactive coordinates on average. This may be acceptable before a second-stage fit, but it is not familywise-error control. The Bonferroni selector in \Cref{prop:bonferroni}, or a DKW threshold of the form in \Cref{thm:uniform}, is appropriate when the scientific goal is a sparse confirmatory set rather than a predictive screening path.

\FloatBarrier
\subsection{Twelve-variable logistic classification}
\label{subsec:sim-logistic}

The second experiment assesses whether the screen can remove variables without changing downstream classification. Twelve predictors are generated from a heterogeneous collection of distributions. Specifically, \(X_1,X_6\sim N(0,1)\), \(X_2\sim\operatorname{Gamma}(2,1)\), \(X_3,X_5\sim\operatorname{Exp}(1)\), \(X_4\sim N(4,1)\), and \(X_{10}\sim\operatorname{Unif}(0,1)\). The remaining variables follow mixtures of exponential, Student \(t\), Laplace, and beta distributions, together with \(X_{11}\sim\operatorname{Unif}(5,8)\) and a Cauchy \(X_{12}\). The mixture weights are those in the original simulation specification: \(X_7\) uses equal exponential and \(t_4\) components; \(X_8\) uses weights \(0.2,0.3,0.5\) on exponential, \(t_4\), and a second exponential component; and \(X_9\) uses equal Laplace and beta components. The notation follows the software parameterizations used in the experiment.

From \(10{,}000\) observations, define
\begin{equation}
\label{eq:logistic-latent}
  Z=X_1^2+5X_4+X_{10}+7X_5,
  \qquad
  Y=\1\{Z>\bar Z\}.
\end{equation}
The response depends only on \(X_1,X_4,X_5,X_{10}\). Screening the predictors by the two response groups at level \(0.05\) selects exactly these four variables. Logistic regression is then fitted once with all twelve predictors and once with the selected four.

\begin{figure}[t]
\centering
\begin{subfigure}[t]{0.48\textwidth}
  \centering
  \includegraphics[width=\textwidth]{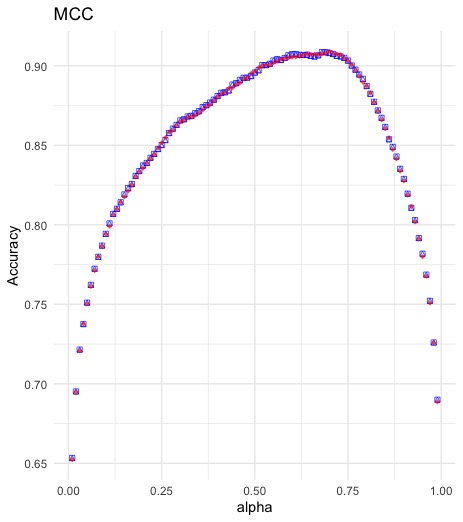}
  \caption{Matthews correlation coefficient.}
\end{subfigure}\hfill
\begin{subfigure}[t]{0.48\textwidth}
  \centering
  \includegraphics[width=\textwidth]{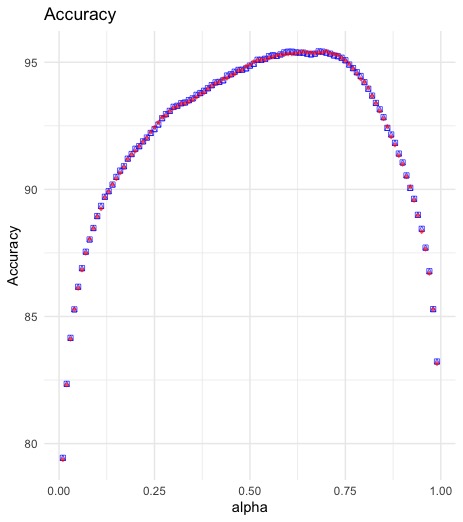}
  \caption{Classification accuracy (percent).}
\end{subfigure}
\caption{Threshold-performance profiles for the full twelve-variable logistic model and the four-variable screened model. The horizontal axis is the probability cutoff \(c\) used to convert fitted probabilities into class labels; it is distinct from the feature-screening level \(\alpha\). The two marker sequences correspond to the full and reduced models and are nearly superimposed. In the left source panel, the vertical-axis label should be read as MCC.}
\label{fig:mcc-accuracy}
\end{figure}

\Cref{fig:mcc-accuracy} varies the classification cutoff \(c\) from near zero to one. Both accuracy and MCC increase rapidly from extreme low cutoffs, remain high over a broad middle range, and then decline as the classifier predicts almost all observations in the opposite class. The two model curves are visually indistinguishable across nearly the entire path. The peak MCC is about \(0.91\), and the peak accuracy is about \(95.5\%\), attained for cutoffs in the neighborhood of \(0.6\)--\(0.7\). The optimum need not occur at \(0.5\) because the fitted logistic probabilities may be imperfectly calibrated and the data-generating rule in \eqref{eq:logistic-latent} is nonlinear.

The reduction from twelve variables to four therefore removes two thirds of the coordinates without a visible loss in this realization. The result is stronger than a coefficient-based selection claim: the screen identifies \(X_1\) because its conditional distribution differs between response groups, even though the generating mechanism uses \(X_1^2\). A linear logistic model in raw \(X_1\) is misspecified for that component, but the model-free screen still detects distributional relevance. In practice, the selected variables can be followed by appropriate basis expansion or a nonlinear classifier.

MCC is useful here because it incorporates all four cells of the binary confusion matrix and remains informative when the class proportions or error costs are uneven \citep{matthews1975}. The near overlap in both MCC and accuracy shows that the result is not an artifact of a single performance measure. For an unbiased comparison, the feature-screening level and probability cutoff should be selected within the training data, with the final test set used only once; the displayed curves are best interpreted as performance profiles.

\FloatBarrier
\subsection{Sparse nonlinear classification with \texorpdfstring{\(10{,}000\)}{10,000} predictors}
\label{subsec:sim-pca}

The third experiment places the method in an ultrahigh-dimensional, heterogeneous setting. Each of \(10{,}000\) columns contains \(1{,}500\) observations. A column family is drawn from a normal, gamma, exponential, Laplace, or Student \(t\) distribution. For the normal and gamma families, the parameters are independently drawn from \(\operatorname{Unif}(0,1)\); the exponential rate is drawn from the same distribution; and the Student degrees of freedom are drawn uniformly from \(\{1,\ldots,10\}\). The response-generating score is
\begin{equation}
\label{eq:highdim-latent}
  Z=X_1^2+4X_{50}+X_{90}^3+5X_{96}+7X_{1000}+4X_{5000},
  \qquad
  Y=\1\{Z>\operatorname{median}(Z)\}.
\end{equation}
Only six of the \(10{,}000\) predictors enter the response, and two enter through nonlinear transformations. The median split makes the two response groups approximately balanced.

A training sample of \(1{,}000\) observations and a test sample of \(500\) observations are used. Following the preceding experiment, logistic regression is fitted after dimension reduction. In the proposed pipeline, the featurewise screening level is varied over a fine grid up to \(0.01\). In the PCA pipeline, the number of retained principal components is varied up to \(500\). For each fitted model, the probability cutoff is varied and the best test-set MCC along that cutoff path is recorded. The resulting maxima are therefore descriptive profiles over the tuning paths, not performance estimates for a tuning rule chosen independently of the test set.

\begin{figure}[t]
\centering
\begin{subfigure}[t]{0.48\textwidth}
  \centering
  \includegraphics[width=\textwidth]{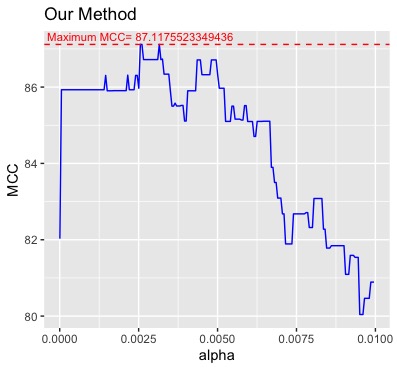}
  \caption{Cluster-directed screening.}
\end{subfigure}\hfill
\begin{subfigure}[t]{0.48\textwidth}
  \centering
  \includegraphics[width=\textwidth]{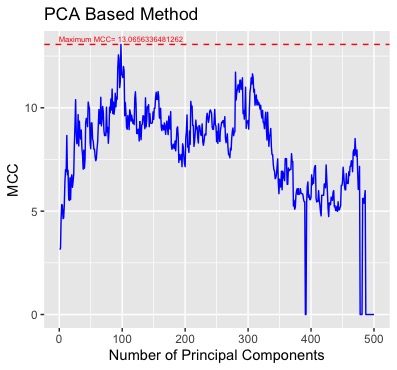}
  \caption{Principal component reduction.}
\end{subfigure}
\caption{Comparison of the proposed screening path and the PCA path in the \(10{,}000\)-predictor experiment. The vertical axes display \(100\times\mathrm{MCC}\). The dashed lines mark the largest observed values: approximately \(87.12\) for the screening pipeline and \(13.07\) for the PCA pipeline.}
\label{fig:pca-comparison}
\end{figure}

The proposed procedure reaches a maximum of approximately \(87.12\) on the \(100\times\mathrm{MCC}\) scale at a screening level near \(0.003\). Performance is lower at the smallest levels, where active variables can be omitted, and declines after the peak as increasingly many nuisance variables enter the logistic fit. This nonmonotone path is consistent with the two error terms in \Cref{thm:screening-errors}: relaxing the threshold initially reduces false omissions but eventually increases false inclusions and downstream estimation noise.

The PCA path is much weaker and considerably more irregular. Its maximum is approximately \(13.07\), near one hundred components, and several large-component fits collapse toward zero MCC. The gap is plausible under the design. PCA is unsupervised and allocates components according to unconditional covariance, whereas the response in \eqref{eq:highdim-latent} depends on six sparse coordinates drawn from distributions with heterogeneous scales. High-variance nuisance directions can dominate the principal components, while a low-variance predictor can still have a pronounced response-conditional distributional difference. \Cref{prop:pca-counterexample} gives an exact population version of this mechanism. The square and cube terms also favor a distributional screen over procedures restricted to linear mean association.

The experiment should not be interpreted as a comprehensive comparison with supervised dimension reduction. Supervised principal components, the Kolmogorov and fused Kolmogorov filters, distance-correlation screening, and sparse classifiers are closer competitors than ordinary PCA. The present result isolates the contrast intended by the paper: variance preservation and cluster preservation need not agree, especially under sparse nonlinear signals.

\FloatBarrier
\subsection{Average dual information}
\label{subsec:sim-adi}

The ADI experiment uses a full reference representation containing \(1000\) cluster-varying columns and \(6000\) random-noise columns. Reduced representations retain different numbers of the two types, and ADI is reported relative to the full representation.

\begin{table}[t]
\centering
\caption{Average dual information for representations containing different numbers of cluster-varying and noise coordinates. Values are percentages.}
\label{tab:adi}
\small
\begin{tabular}{rrr}
\toprule
Cluster-varying columns & Random-noise columns & ADI (\%) \\
\midrule
1000 (full) & 6000 (full) & 100.00000 \\
1000 & 0    & 90.65041 \\
0    & 6000 & 12.10472 \\
0    & 3000 & 7.851681 \\
1    & 3000 & 8.051276 \\
10   & 3000 & 56.66667 \\
100  & 3000 & 60.32520 \\
500  & 3000 & 76.58537 \\
1000 & 3000 & 96.91057 \\
1000 & 1000 & 92.80488 \\
\bottomrule
\end{tabular}
\end{table}

The identity representation has ADI \(100\%\) by construction. A representation containing all \(1000\) cluster-varying columns and no noise retains most of the reference information, with ADI \(90.65\%\). Noise-only representations have low scores: \(12.10\%\) with all \(6000\) noise columns and \(7.85\%\) with \(3000\). Adding a single informative coordinate to the latter changes the score only slightly, whereas adding \(10\), \(100\), \(500\), and \(1000\) informative coordinates produces progressively larger values. The sharp increase between one and ten coordinates indicates that the partition-agreement component can change nonlinearly when a small number of decisive variables becomes available; ADI is not constrained to be linear outside the controlled conditions of \Cref{prop:adi-path}.

The rows containing all \(1000\) informative coordinates also clarify the reference-dependent nature of ADI. Scores are \(90.65\%\), \(92.80\%\), \(96.91\%\), and \(100\%\) as the number of retained noise columns moves from \(0\) to \(1000\), \(3000\), and the full \(6000\). Because the reference partition was itself obtained from the full representation, reintroducing some of the same nuisance variation can make the transformed clustering more similar to that reference, even though the added columns are not cluster-varying in population. ADI therefore measures fidelity to the chosen full-data partition, not causal or intrinsic information about the latent classes. This is a feature of the definition that should be made explicit when interpreting the score.

\begin{figure}[p]
\centering
\includegraphics[page=1,width=0.30\textwidth]{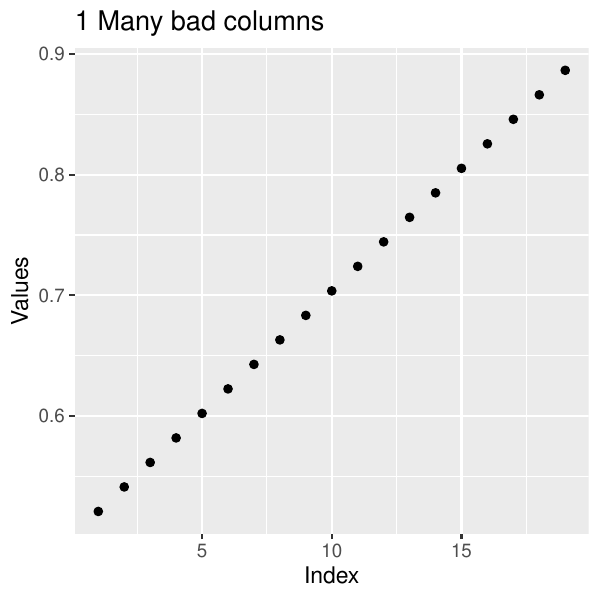}\hfill
\includegraphics[page=2,width=0.30\textwidth]{plots.pdf}\hfill
\includegraphics[page=3,width=0.30\textwidth]{plots.pdf}

\medskip
\includegraphics[page=4,width=0.30\textwidth]{plots.pdf}\hfill
\includegraphics[page=5,width=0.30\textwidth]{plots.pdf}\hfill
\includegraphics[page=6,width=0.30\textwidth]{plots.pdf}
\caption{ADI paths as informative coordinates are added while the number of nuisance columns ranges from 1 to 2501. The nearly identical, approximately linear paths indicate that nuisance-column count has little visible effect in this controlled sequence.}
\label{fig:adi-paths-a}
\end{figure}

\begin{figure}[p]
\centering
\includegraphics[page=7,width=0.30\textwidth]{plots.pdf}\hfill
\includegraphics[page=8,width=0.30\textwidth]{plots.pdf}\hfill
\includegraphics[page=9,width=0.30\textwidth]{plots.pdf}

\medskip
\includegraphics[page=10,width=0.30\textwidth]{plots.pdf}
\caption{Continuation of \Cref{fig:adi-paths-a}, with 3001, 3501, 4001, and 4501 nuisance columns. The paths remain visually unchanged.}
\label{fig:adi-paths-b}
\end{figure}

Across the ten panels in \Cref{fig:adi-paths-a,fig:adi-paths-b}, the plotted values rise almost linearly from roughly \(0.52\) to \(0.89\) as the index increases. More importantly, the paths are visually the same when the nuisance count ranges from \(1\) to \(4501\). This behavior is consistent with \Cref{prop:adi-path}: if nuisance coordinates are not selected and do not alter the induced partition, the ADI components are unchanged; if the sequence adds the same number of previously uncovered informative coordinates at each step while partition agreement is approximately stable, the structural term produces a linear increase. The fact that the paths stop below one indicates that the sequence does not reproduce both perfect partition agreement and complete structural coverage.

\FloatBarrier
\section{Scientific use and implementation}
\label{sec:applications}

The most direct applications arise when the grouping variable is scientifically meaningful and the measurements are numerous. In a case-control study, for example, the method can screen biomarkers whose marginal distributions differ between cases and controls. In a multicenter clinical study, it can identify measurements that vary across treatment arms or phenotypic subgroups. In an unsupervised workflow, a stable cluster solution obtained from an independent discovery sample can be transferred to a validation sample, and the screen can then identify a parsimonious set of variables that reproduces the cluster distinctions.

The interpretation is distributional rather than causal. Selection means that at least two group-specific marginal distributions differ. It does not establish that a variable causes the group assignment, that it contributes conditionally after other selected variables are known, or that it will remain selected under a different partition. Correlated active variables may all survive, and interaction-only variables may be missed. When conditional relevance is the goal, the screen is best viewed as a first stage followed by a joint model, conditional independence analysis, or an iterative residual-screening step.

Several implementation choices follow directly from the theory. First, the testing level should reflect the purpose. A fixed \(\alpha\) creates a useful prediction path but retains about \(p_0\alpha\) inactive variables in high-power settings. Bonferroni or Holm adjustment is appropriate for familywise control, whereas a false-discovery-rate procedure can be used for exploratory discovery under its stated dependence conditions. Second, labels obtained from the same data should be separated from feature testing by an independent sample or sample splitting; otherwise ordinary permutation calibration is optimistic. Third, the feature-selection step must be repeated inside each resampling fold when evaluating a downstream model. Fourth, the selected variables should be inspected for redundancy and scientific plausibility rather than interpreted as a minimal conditional Markov blanket.

Computationally, each coordinate requires empirical-distribution comparisons across clusters. Sorting each column gives a direct implementation with complexity on the order of \(O(pn\log n)\), and the computation is embarrassingly parallel across coordinates. When \(p\) is too large to hold the full matrix in memory, columns can be streamed because neither the statistic nor the thresholding step requires simultaneous access to all coordinates. Permutation calibration can likewise be parallelized, although the analytic threshold in \Cref{thm:uniform} avoids resampling when conservative simultaneous control is acceptable.

\section{Discussion and conclusion}
\label{sec:discussion}

Cluster-directed dimensionality reduction is most naturally viewed as a supervised screening problem with the supplied partition playing the role of a categorical response. The proposed procedure retains original coordinates whose cluster-specific empirical distribution functions differ. This target provides interpretability, responds to changes beyond the mean, and is invariant to monotone re-expression. The theory gives a finite-sample account of its high-dimensional behavior: the maximal estimation error is of order \(\{\log(rp)/n_{\min}\}^{1/2}\), exact support recovery follows from a corresponding minimum-separation condition, and a fixed unadjusted level contributes approximately \(p_0\alpha\) nuisance variables in high-power regimes.

The connection between screening and cluster integrity requires an explicit assumption. Marginal distributional activity alone does not rule out interaction-only information or conditional redundancy. Under marginal cluster sufficiency, however, any selected set containing the active support preserves the complete posterior distribution of the label, the mutual information with the label, and the Bayes risk. This result formalizes why false positives are less damaging than false negatives for population cluster preservation, while still acknowledging that false positives can degrade finite-sample model fitting. The label-contamination result further shows how imperfect preliminary clustering enters the signal requirement as an additive perturbation.

Average dual information complements the screening rule by evaluating arbitrary transformations relative to a reference partition. Its agreement component measures whether the partition is reproduced; its structural component measures whether the retained transformed coordinates continue to depend on the source variables identified as cluster-relevant. The criterion is bounded, monotone under componentwise dominance, stable under nuisance augmentation when the clustering and screening rules are themselves stable, and consistent when the relevant supports and confusion proportions are consistently estimated. Its reference dependence should be retained in interpretation. A high ADI means fidelity to a chosen representation and partition, not an absolute information-theoretic optimum.

The simulations support these conclusions but also delimit the evidence. The selected dimensions in the Gaussian study align closely with the false-positive accounting implied by the testing level, and weak equal-variance signals behave as the KS separation predicts. The logistic experiment shows that substantial reduction can leave downstream performance unchanged, while the high-dimensional comparison illustrates the danger of substituting variance preservation for label preservation. The ADI paths behave as expected under controlled nuisance augmentation. These are representative runs without Monte Carlo uncertainty, and the PCA comparison is not a substitute for benchmarking against supervised principal components, fused Kolmogorov screening, distance-correlation screening, sparse discriminant analysis, and modern nonlinear learners.

Several extensions are immediate. Conditional or iterative distributional screening can address variables with weak marginal but strong joint effects. Cross-fitted clustering can make the method inferentially valid when the partition is estimated. A weighted ADI structural term can distinguish weak from strong dependence in dense transformations, particularly for principal components. Mixed discrete-continuous data can be handled naturally through permutation calibration. Finally, repeated simulation experiments and real-data analyses are needed to assess tuning stability, computational cost, and scientific reproducibility across realistic dependence structures. Within its stated marginal target, the method provides a transparent route from a high-dimensional partition to an interpretable, cluster-preserving coordinate representation.

\appendix
\section{Proofs}
\label{app:proofs}

\subsection{Proof of the uniform concentration result}

\textbf{Proof of Theorem \ref{thm:uniform}.}
For each cluster \(g\) and coordinate \(j\), define the empirical-process error
\begin{equation}
\label{eq:proof-Egj}
  E_{gj}=\norm{\wh F_{gj}-F_{gj}}_\infty.
\end{equation}
Under \Cref{ass:sampling}, the \(n_g\) observations entering \(\wh F_{gj}\) are independent and identically distributed with distribution function \(F_{gj}\). The sharp Dvoretzky--Kiefer--Wolfowitz inequality therefore gives, for every \(\varepsilon>0\),
\begin{equation}
\label{eq:proof-dkw-one}
  \Pp(E_{gj}>\varepsilon)
  \leq 2\exp(-2n_g\varepsilon^2)
  \leq 2\exp(-2n_{\min}\varepsilon^2).
\end{equation}
No relation among different coordinates is used in this inequality.

We next compare the empirical and population pairwise separations. Fix \(j\) and a pair \(g<h\). The reverse triangle inequality in the normed space of bounded functions gives
\begin{align}
&\left|
  \norm{\wh F_{gj}-\wh F_{hj}}_\infty
  -\norm{F_{gj}-F_{hj}}_\infty
\right|
\nonumber\\
&\qquad\leq
\norm{(\wh F_{gj}-\wh F_{hj})-(F_{gj}-F_{hj})}_\infty
\nonumber\\
&\qquad=
\norm{(\wh F_{gj}-F_{gj})-(\wh F_{hj}-F_{hj})}_\infty
\nonumber\\
&\qquad\leq
\norm{\wh F_{gj}-F_{gj}}_\infty
+\norm{\wh F_{hj}-F_{hj}}_\infty
\nonumber\\
&\qquad=E_{gj}+E_{hj}.
\label{eq:proof-pair-bound}
\end{align}
For two finite collections of real numbers \(\{a_k\}\) and \(\{b_k\}\),
\(
  |\max_k a_k-\max_k b_k|\leq\max_k|a_k-b_k|.
\)
Applying this fact to the pairs \((g,h)\) and then using \eqref{eq:proof-pair-bound} yields
\begin{equation}
\label{eq:proof-delta-bound}
  \abs{\wh\Delta_j-\Delta_j}
  \leq
  \max_{g<h}(E_{gj}+E_{hj})
  \leq 2\max_{1\leq g\leq r}E_{gj}.
\end{equation}
Consequently,
\begin{align}
&\Pp\left(
  \max_{1\leq j\leq p}\abs{\wh\Delta_j-\Delta_j}>u
\right)
\nonumber\\
&\qquad\leq
\Pp\left(
  \max_{1\leq g\leq r}\max_{1\leq j\leq p}E_{gj}>\frac{u}{2}
\right)
\nonumber\\
&\qquad\leq
\sum_{g=1}^r\sum_{j=1}^p
\Pp\left(E_{gj}>\frac{u}{2}\right)
\nonumber\\
&\qquad\leq
2rp\exp\left\{-2n_{\min}\left(\frac{u}{2}\right)^2\right\}
=2rp\exp\left(-\frac{n_{\min}u^2}{2}\right),
\label{eq:proof-uniform-tail}
\end{align}
where the second inequality is the union bound and the last uses \eqref{eq:proof-dkw-one}. This proves \eqref{eq:uniform-tail}.

To obtain \eqref{eq:uniform-radius}, choose \(u=u_n(\eta)\). Then
\[
2rp\exp\left(-\frac{n_{\min}u_n(\eta)^2}{2}\right)
=2rp\exp\left\{-\log\left(\frac{2rp}{\eta}\right)\right\}
=\eta.
\]
Thus the complement of the event in \eqref{eq:uniform-radius} has probability at most \(\eta\). \(\square\)

\subsection{Proofs for support recovery and multiplicity}

\textbf{Proof of Theorem \ref{thm:screening-errors}.}
We first bound the probability of at least one false inclusion. If \(j\in\cN\), then \(\Delta_j=0\). By \eqref{eq:proof-delta-bound},
\begin{equation}
\label{eq:proof-null-delta}
  \wh\Delta_j
  =\abs{\wh\Delta_j-\Delta_j}
  \leq 2\max_gE_{gj}.
\end{equation}
Therefore the event \(\{\wh\Delta_j>\tau\}\) implies
\(
  \max_gE_{gj}>\tau/2.
\)
Using the union bound over clusters and then the Dvoretzky--Kiefer--Wolfowitz inequality,
\begin{align}
  \Pp(\wh\Delta_j>\tau)
  &\leq
  \sum_{g=1}^r\Pp(E_{gj}>\tau/2)
  \nonumber\\
  &\leq
  2r\exp\left\{-2n_{\min}(\tau/2)^2\right\}
  =2r\exp(-n_{\min}\tau^2/2).
\label{eq:proof-one-fp}
\end{align}
The event \(\wh\cS_\tau\not\subseteq\cS\) is the union of \(\{\wh\Delta_j>\tau\}\) over the \(p_0\) inactive coordinates. A second union bound gives
\[
  \Pp(\wh\cS_\tau\not\subseteq\cS)
  \leq 2rp_0\exp(-n_{\min}\tau^2/2),
\]
which proves \eqref{eq:false-inclusion}.

We next bound the probability of a false omission. Fix \(j\in\cS\). If \(j\notin\wh\cS_\tau\), then \(\wh\Delta_j\leq\tau\). Since \(\Delta_j\geq\Delta_{\min}>\tau\),
\begin{equation}
\label{eq:proof-miss-gap}
  \abs{\wh\Delta_j-\Delta_j}
  \geq \Delta_j-\wh\Delta_j
  \geq \Delta_{\min}-\tau.
\end{equation}
Combining \eqref{eq:proof-miss-gap} with \eqref{eq:proof-delta-bound}, a miss implies
\(
  \max_gE_{gj}\geq(\Delta_{\min}-\tau)/2.
\)
Hence
\begin{align}
\Pp(j\notin\wh\cS_\tau)
&\leq
\sum_{g=1}^r
\Pp\left(E_{gj}>\frac{\Delta_{\min}-\tau}{2}\right)
\nonumber\\
&\leq
2r\exp\left\{-2n_{\min}
\left(\frac{\Delta_{\min}-\tau}{2}\right)^2\right\}
\nonumber\\
&=2r\exp\left\{-\frac{n_{\min}(\Delta_{\min}-\tau)^2}{2}\right\}.
\label{eq:proof-one-fn}
\end{align}
Taking a union bound over the \(s\) active coordinates proves \eqref{eq:false-omission}.

The event \(\{\wh\cS_\tau\neq\cS\}\) is contained in the union of the false-inclusion and false-omission events. Adding the two bounds proves \eqref{eq:exact-recovery-bound}.

For the last statement, let \(u=u_n(\eta)\) and consider the event
\(
  \max_j|\wh\Delta_j-\Delta_j|\leq u,
\)
which has probability at least \(1-\eta\) by \Cref{thm:uniform}. If \(j\in\cN\), then \(\wh\Delta_j\leq u\), so it is not selected by the strict threshold \(\tau=u\). If \(j\in\cS\), then
\[
  \wh\Delta_j\geq\Delta_j-u
  \geq\Delta_{\min}-u>u=\tau
\]
whenever \(\Delta_{\min}>2u\). Thus every active coordinate and no inactive coordinate is selected on this event. \(\square\)

\textbf{Proof of Corollary \ref{cor:highdim}.}
Apply \Cref{thm:screening-errors} with \(\tau_n=Ca_n\). Since \(p_0\leq p_n\), the false-inclusion probability is bounded by
\begin{align}
2r_np_0\exp(-n_{\min,n}\tau_n^2/2)
&\leq
2r_np_n
\exp\left\{-\frac{C^2}{2}\log(r_np_n)\right\}
\nonumber\\
&=2(r_np_n)^{1-C^2/2}.
\label{eq:proof-cor-fp}
\end{align}
Because \(C>\sqrt2\), the exponent \(1-C^2/2\) is negative. Since \(r_np_n\to\infty\), the right-hand side tends to zero.

By \eqref{eq:beta-min-rate},
\[
  \Delta_{\min,n}-\tau_n
  \geq (C+D)a_n-Ca_n=Da_n.
\]
Since \(s\leq p_n\), the false-omission probability is at most
\begin{align}
2r_ns\exp\left\{-\frac{n_{\min,n}
(\Delta_{\min,n}-\tau_n)^2}{2}\right\}
&\leq
2r_np_n\exp\left\{-\frac{D^2}{2}\log(r_np_n)\right\}
\nonumber\\
&=2(r_np_n)^{1-D^2/2},
\label{eq:proof-cor-fn}
\end{align}
which tends to zero because \(D>\sqrt2\). The exact-recovery error is bounded by the sum of \eqref{eq:proof-cor-fp} and \eqref{eq:proof-cor-fn}, so it tends to zero.

If \(\Delta_{\min,n}\) is bounded below by a fixed positive constant and \(\log(r_np_n)=o(n_{\min,n})\), then \(a_n\to0\), and \eqref{eq:beta-min-rate} holds eventually for any fixed \(C,D\). This proves the final assertion. \(\square\)

\textbf{Proof of Proposition \ref{prop:fixed-alpha}.}
For each inactive coordinate, define
\(
  I_j=\1\{P_j\leq\alpha\}.
\)
Then
\(
  V_\alpha=\sum_{j\in\cN}I_j.
\)
By linearity of expectation, which does not require independence,
\begin{align}
  \E(V_\alpha)
  &=\sum_{j\in\cN}\E(I_j)
  =\sum_{j\in\cN}\Pp(P_j\leq\alpha)
  \leq\sum_{j\in\cN}\alpha
  =p_0\alpha.
\label{eq:proof-ev}
\end{align}
This proves \eqref{eq:ev}. If each null \(p\)-value is exactly uniform, then every inequality in \eqref{eq:proof-ev} is an equality.

Write
\[
  T_\alpha=\sum_{j\in\cS}\1\{P_j\leq\alpha\}
\]
to denote the number of active coordinates selected. Since
\(
  \wh m_\alpha=V_\alpha+T_\alpha,
\)
we have
\begin{align}
  \E(\wh m_\alpha)
  &=\E(V_\alpha)+\E(T_\alpha)
  \nonumber\\
  &\leq p_0\alpha+
  \sum_{j\in\cS}\Pp(P_j\leq\alpha)
  =p_0\alpha+
  \sum_{j\in\cS}\pi_j(\alpha),
\end{align}
which proves \eqref{eq:em}.

If the null \(p\)-values are exactly uniform and mutually independent, then the indicators \(I_j\), \(j\in\cN\), are independent Bernoulli random variables with common success probability \(\alpha\). Their sum is therefore \(\operatorname{Binomial}(p_0,\alpha)\), proving \eqref{eq:binomial}. Finally, Hoeffding's inequality for a sum of \(p_0\) independent random variables taking values in \([0,1]\) gives
\[
  \Pp\left(
  \abs{V_\alpha-\E V_\alpha}\geq t
  \right)
  \leq 2\exp(-2t^2/p_0).
\]
Under exact uniformity, \(\E V_\alpha=p_0\alpha\), yielding \eqref{eq:null-hoeffding}. \(\square\)

\textbf{Proof of Proposition \ref{prop:bonferroni}.}
A false inclusion occurs only if at least one inactive coordinate has \(P_j\leq q/p\). Therefore, by the union bound and super-uniformity,
\begin{align}
\Pp\left(\wh\cS_{\mathrm{Bonf}}(q)\not\subseteq\cS\right)
&=\Pp\left(\bigcup_{j\in\cN}\set{P_j\leq q/p}\right)
\nonumber\\
&\leq\sum_{j\in\cN}\Pp(P_j\leq q/p)
\nonumber\\
&\leq p_0\frac{q}{p}
\leq q.
\end{align}
No dependence assumption is used. \(\square\)

\subsection{Proofs for cluster preservation and label robustness}

\textbf{Proof of Theorem \ref{thm:cluster-integrity}.}
Fix a coordinate set \(T\supseteq\cS\) and a cluster label \(g\). Let
\(
  \eta_g(\bm X_{\cS})=\Pp(Z=g\mid\bm X_{\cS}).
\)
By \Cref{ass:sufficiency},
\begin{equation}
\label{eq:proof-suff-full}
  \Pp(Z=g\mid\bm X)
  =\eta_g(\bm X_{\cS})
  \quad\text{almost surely}.
\end{equation}
Because \(\bm X_T\) is measurable with respect to \(\bm X\), the tower property of conditional expectation gives
\begin{align}
  \Pp(Z=g\mid\bm X_T)
  &=\E\{\1(Z=g)\mid\bm X_T\}
  \nonumber\\
  &=\E\left[
     \E\{\1(Z=g)\mid\bm X\}
     \mid\bm X_T
  \right]
  \nonumber\\
  &=\E\left[
     \eta_g(\bm X_{\cS})
     \mid\bm X_T
  \right].
\label{eq:proof-tower}
\end{align}
Since \(T\supseteq\cS\), the random variable \(\eta_g(\bm X_{\cS})\) is measurable with respect to \(\bm X_T\). Hence the final conditional expectation in \eqref{eq:proof-tower} equals \(\eta_g(\bm X_{\cS})\). Combining this with \eqref{eq:proof-suff-full} proves \eqref{eq:posterior-preservation}.

For any action \(a\), the conditional risk based on a sigma-field generated by covariates is
\[
  \E\{L(a,Z)\mid\text{covariates}\}
  =\sum_{g=1}^rL(a,g)\Pp(Z=g\mid\text{covariates}).
\]
By \eqref{eq:posterior-preservation}, this conditional risk is the same function of \(a\) whether the covariates are \(\bm X_T\), \(\bm X_{\cS}\), or \(\bm X\). Therefore its infimum over \(a\) is the same almost surely in all three cases. Taking expectations proves \eqref{eq:risk-preservation}.

Because \(Z\) takes finitely many values, its conditional entropy is determined by the posterior vector:
\[
  H(Z\mid\text{covariates})
  =\E\left[-\sum_{g=1}^r
  \Pp(Z=g\mid\text{covariates})
  \log\Pp(Z=g\mid\text{covariates})\right],
\]
with the convention \(0\log0=0\). The posterior equality in \eqref{eq:posterior-preservation} therefore implies
\[
  H(Z\mid\bm X_T)=H(Z\mid\bm X_{\cS})=H(Z\mid\bm X).
\]
Since \(I(Z;W)=H(Z)-H(Z\mid W)\), subtracting these equal conditional entropies from the common finite entropy \(H(Z)\) proves \eqref{eq:mi-preservation}. \(\square\)

\textbf{Proof of Corollary \ref{cor:risk}.}
For any realized training sample, using more coordinates cannot increase the Bayes risk, because a decision rule based on a smaller set can always be regarded as a rule based on the larger set that ignores the additional coordinates. Hence
\begin{equation}
\label{eq:proof-bayes-monotone}
  R^*(\wh\cS_\tau)-R^*(\{1,\ldots,p\})\geq0.
\end{equation}
On the event \(\cS\subseteq\wh\cS_\tau\), \Cref{thm:cluster-integrity} gives
\(
  R^*(\wh\cS_\tau)=R^*(\{1,\ldots,p\}).
\)
On the complementary event, both Bayes risks lie in \([0,L_{\max}]\), so their nonnegative difference is at most \(L_{\max}\). Therefore, pointwise in the training sample,
\begin{equation}
\label{eq:proof-risk-indicator}
  0\leq
  R^*(\wh\cS_\tau)-R^*(\{1,\ldots,p\})
  \leq
  L_{\max}\1\{\cS\not\subseteq\wh\cS_\tau\}.
\end{equation}
Taking expectation over the training sample and applying \eqref{eq:false-omission} proves
\begin{align*}
  \E_{\mathrm{train}}[R^*(\wh\cS_\tau)-R^*(\{1,\ldots,p\})]
  &\leq L_{\max}\Pp(\cS\not\subseteq\wh\cS_\tau)\\
  &\leq 2L_{\max}rs
  \exp\left\{-\frac{n_{\min}(\Delta_{\min}-\tau)^2}{2}\right\}.
\end{align*}
This is \eqref{eq:expected-risk-bound}. \(\square\)

\textbf{Proof of Theorem \ref{thm:estimated-labels}.}
Define the true-label empirical separation
\(
  \wh\Delta_j=\max_{g<h}\norm{\wh F_{gj}-\wh F_{hj}}_\infty
\)
and the estimated-label separation
\(
  \wt\Delta_j=\max_{g<h}\norm{\wt F_{gj}-\wt F_{hj}}_\infty.
\)
For a fixed coordinate \(j\) and pair \(g<h\), the reverse triangle inequality gives
\begin{align}
&\left|
\norm{\wt F_{gj}-\wt F_{hj}}_\infty
-\norm{\wh F_{gj}-\wh F_{hj}}_\infty
\right|
\nonumber\\
&\qquad\leq
\norm{(\wt F_{gj}-\wh F_{gj})-(\wt F_{hj}-\wh F_{hj})}_\infty
\nonumber\\
&\qquad\leq
\norm{\wt F_{gj}-\wh F_{gj}}_\infty
+\norm{\wt F_{hj}-\wh F_{hj}}_\infty
\leq2\rho_n.
\label{eq:proof-label-pair}
\end{align}
Taking maxima over pairs and again using
\(
|\max a_k-\max b_k|\leq\max|a_k-b_k|
\)
yields
\begin{equation}
\label{eq:proof-label-delta}
  \max_j\abs{\wt\Delta_j-\wh\Delta_j}\leq2\rho_n.
\end{equation}
By \Cref{thm:uniform}, the event
\begin{equation}
\label{eq:proof-E-event}
  \mathcal E_1=
  \set{\max_j\abs{\wh\Delta_j-\Delta_j}\leq u_n(\eta)}
\end{equation}
has probability at least \(1-\eta\). By assumption, the event
\(
  \mathcal E_2=\{\rho_n\leq\bar\rho_n\}
\)
has probability at least \(1-\delta_n\). The union bound gives
\(
  \Pp(\mathcal E_1\cap\mathcal E_2)\geq1-\eta-\delta_n.
\)
On this intersection, for every \(j\),
\begin{align}
  \abs{\wt\Delta_j-\Delta_j}
  &\leq
  \abs{\wt\Delta_j-\wh\Delta_j}
  +\abs{\wh\Delta_j-\Delta_j}
  \nonumber\\
  &\leq2\bar\rho_n+u_n(\eta)
  =w_n(\eta),
\end{align}
proving \eqref{eq:estimated-label-uniform}.

If \(j\in\cN\), then \(\Delta_j=0\), so on the same event
\(
  \wt\Delta_j\leq w_n(\eta)\leq\tau
\)
whenever \(\tau\geq w_n(\eta)\). Such a coordinate is not selected by the strict inequality defining \(\wt\cS_\tau\), proving \(\wt\cS_\tau\subseteq\cS\).

If \(j\in\cS\), then
\[
  \wt\Delta_j
  \geq\Delta_j-w_n(\eta)
  \geq\Delta_{\min}-w_n(\eta).
\]
This quantity is larger than \(\tau\) whenever
\(
  \Delta_{\min}>\tau+w_n(\eta).
\)
Thus every active coordinate is selected. Setting \(\tau=w_n(\eta)\) makes the two conditions hold simultaneously when \(\Delta_{\min}>2w_n(\eta)\). \(\square\)

\textbf{Proof of Corollary \ref{cor:misassignment}.}
We first establish a deterministic bound for empirical distributions based on two index sets. Let \(A\) and \(B\) be nonempty finite sets of indices, with
\[
  C=A\cap B,
  \quad a=\abs{A\setminus B},
  \quad b=\abs{B\setminus A},
  \quad c=\abs C.
\]
Then \(N=\abs A=c+a\), \(M=\abs B=c+b\), and
\(
  d=\abs{A\triangle B}=a+b.
\)
For any function \(q_i\in[0,1]\),
\begin{align}
&\left|
\frac{1}{N}\sum_{i\in A}q_i
-\frac{1}{M}\sum_{i\in B}q_i
\right|
\nonumber\\
&\quad\leq
\left|\frac{1}{N}-\frac{1}{M}\right|
\sum_{i\in C}q_i
+\frac{1}{N}\sum_{i\in A\setminus B}q_i
+\frac{1}{M}\sum_{i\in B\setminus A}q_i
\nonumber\\
&\quad\leq
c\frac{|M-N|}{NM}+\frac{a}{N}+\frac{b}{M}.
\label{eq:proof-set-empirical}
\end{align}
Since \(|M-N|=|b-a|\leq a+b=d\) and
\(
  c/(NM)\leq1/\max(N,M)\leq1/\min(N,M),
\)
the first term in \eqref{eq:proof-set-empirical} is at most
\(
  d/\min(N,M).
\)
The sum of the last two terms is also at most
\(
  d/\min(N,M).
\)
Therefore
\begin{equation}
\label{eq:proof-set-tv}
  \left|
\frac{1}{N}\sum_{i\in A}q_i
-\frac{1}{M}\sum_{i\in B}q_i
\right|
\leq\frac{2d}{\min(N,M)}.
\end{equation}

Apply \eqref{eq:proof-set-tv} with \(A=I_g\), \(B=\wt I_g\), and
\(
  q_i=\1\{X_{ij}\leq x\}
\)
for arbitrary \(j\) and \(x\). Under \eqref{eq:set-misassignment},
\[
  d\leq\varepsilon_n n_g,
  \qquad
  \min(N,M)\geq(1-\varepsilon_n)n_g.
\]
Hence
\[
  \abs{\wt F_{gj}(x)-\wh F_{gj}(x)}
  \leq\frac{2\varepsilon_n}{1-\varepsilon_n}.
\]
The bound is uniform in \(x\), \(j\), and \(g\), proving \eqref{eq:rho-misassignment}. Substituting
\(
  \bar\rho_n=2\varepsilon_n/(1-\varepsilon_n)
\)
into
\(
  w_n(\eta)=u_n(\eta)+2\bar\rho_n
\)
gives \eqref{eq:w-misassignment}. \(\square\)

\subsection{Proofs for invariance, the PCA contrast, and bootstrap ranking}

\textbf{Proof of Proposition \ref{prop:monotone}.}
Fix coordinate \(j\) and write \(h=h_j\). Because \(h\) is a continuous strictly increasing bijection, it has a continuous strictly increasing inverse. For any cluster \(g\) and \(y\in\R\),
\begin{align}
  F_{gj}'(y)
  &=\Pp\{h(X_{gij})\leq y\}
  =\Pp\{X_{gij}\leq h^{-1}(y)\}
  =F_{gj}\{h^{-1}(y)\}.
\label{eq:proof-monotone-pop}
\end{align}
Thus, for every pair \(g<h\),
\begin{align}
  \norm{F_{gj}'-F_{hj}'}_\infty
  &=\sup_{y\in\R}
  \abs{F_{gj}\{h^{-1}(y)\}-F_{hj}\{h^{-1}(y)\}}
  \nonumber\\
  &=\sup_{x\in\R}\abs{F_{gj}(x)-F_{hj}(x)}
  =\norm{F_{gj}-F_{hj}}_\infty,
\label{eq:proof-monotone-pop-sup}
\end{align}
where the second equality uses that \(h^{-1}\) maps \(\R\) onto \(\R\). Taking the maximum over pairs proves \(\Delta_j'=\Delta_j\).

The empirical distribution transforms in exactly the same way:
\begin{align*}
  \wh F_{gj}'(y)
  &=\frac{1}{n_g}\sum_{i=1}^{n_g}
  \1\{h(X_{gij})\leq y\}\\
  &=\frac{1}{n_g}\sum_{i=1}^{n_g}
  \1\{X_{gij}\leq h^{-1}(y)\}
  =\wh F_{gj}\{h^{-1}(y)\}.
\end{align*}
Repeating \eqref{eq:proof-monotone-pop-sup} with empirical distribution functions gives \(\wh\Delta_j'=\wh\Delta_j\). Therefore every direct threshold comparison and the ordering of the statistics are unchanged. In every permutation replicate, the same pointwise transformation is applied after relabeling, so the permuted statistic is also unchanged. The permutation reference distribution and hence the permutation \(p\)-value are identical. \(\square\)

\textbf{Proof of Proposition \ref{prop:pca-counterexample}.}
Because \(\E Z=0\), \(\E\epsilon_j=0\), and all errors are independent of \(Z\), the mean of \(\bm X\) is zero. The variance of the first coordinate is
\begin{equation}
\label{eq:proof-var-x1}
  \operatorname{Var}(X_1)
  =\operatorname{Var}(\mu Z)+\operatorname{Var}(\epsilon_1)
  =\mu^2+\sigma_1^2.
\end{equation}
For \(j\geq2\), \(\operatorname{Var}(X_j)=\sigma_j^2\). Every off-diagonal covariance is zero. For example, for \(j\geq2\),
\[
  \operatorname{Cov}(X_1,X_j)
  =\E[(\mu Z+\epsilon_1)\epsilon_j]=0,
\]
and the noise coordinates are mutually independent. Hence the population covariance matrix is diagonal with entries
\(
  \mu^2+\sigma_1^2,\sigma_2^2,\ldots,\sigma_p^2.
\)

Under \eqref{eq:pca-variance-order}, at least \(q\) eigenvalues associated with noise coordinates exceed the eigenvalue associated with coordinate one. Therefore a leading \(q\)-dimensional principal eigenspace lies entirely in the span of the noise coordinates \(X_2,\ldots,X_p\). Any PCA score vector in this subspace is a function only of \((\epsilon_2,\ldots,\epsilon_p)\), which is independent of \(Z\). If \(U\) denotes the retained PCA representation, then
\[
  \Pp(Z=1\mid U)=\Pp(Z=1)=\frac12
\]
almost surely. Under zero-one loss, every classifier based on \(U\) has Bayes error
\(
  1-\max\{1/2,1/2\}=1/2.
\)

We now compute the marginal KS separation. Conditional on \(Z=1\),
\(
  X_1\sim N(\mu,\sigma_1^2),
\)
and conditional on \(Z=-1\),
\(
  X_1\sim N(-\mu,\sigma_1^2).
\)
Without loss of generality assume \(\mu>0\), since the separation depends only on \(|\mu|\). The difference between the two conditional distribution functions is
\begin{equation}
\label{eq:proof-normal-diff}
  d(x)
  =\Phi\left(\frac{x+\mu}{\sigma_1}\right)
   -\Phi\left(\frac{x-\mu}{\sigma_1}\right)>0.
\end{equation}
Its derivative is
\begin{equation}
\label{eq:proof-normal-derivative}
  d'(x)=\frac{1}{\sigma_1}
  \left[
  \varphi\left(\frac{x+\mu}{\sigma_1}\right)
  -\varphi\left(\frac{x-\mu}{\sigma_1}\right)
  \right],
\end{equation}
where \(\varphi\) is the standard normal density. If \(x<0\), then
\((x+\mu)^2<(x-\mu)^2\), so the first density in \eqref{eq:proof-normal-derivative} is larger and \(d'(x)>0\). If \(x>0\), the inequality is reversed and \(d'(x)<0\). Thus \(d\) is maximized at \(x=0\), where
\begin{align}
  d(0)
  &=\Phi(\mu/\sigma_1)-\Phi(-\mu/\sigma_1)
  =2\Phi(\mu/\sigma_1)-1.
\end{align}
Replacing \(\mu\) by \(|\mu|\) proves the first equality in \eqref{eq:normal-ks-signal}. For \(j\geq2\), the conditional distribution of \(X_j=\epsilon_j\) does not depend on \(Z\), so \(\Delta_j=0\). The active support is therefore \(\{1\}\). The final recovery statement follows by applying \Cref{thm:screening-errors} to this support. \(\square\)

\textbf{Proof of Theorem \ref{thm:bootstrap-ranking}.}
Conditional on the observed data, the variables \(W_{1j},\ldots,W_{Bj}\) are independent Bernoulli random variables with common mean \(\pi_j^*\). Hoeffding's inequality gives, for each fixed \(j\),
\begin{equation}
\label{eq:proof-bootstrap-one}
  \Pp^*\left(\abs{\wh\pi_j^*-\pi_j^*}>t\right)
  \leq2\exp(-2Bt^2).
\end{equation}
Taking a union bound over \(j=1,\ldots,p\) proves \eqref{eq:bootstrap-uniform}.

Now consider the event
\begin{equation}
\label{eq:proof-bootstrap-good}
  \mathcal B=\set{
  \max_j\abs{\wh\pi_j^*-\pi_j^*}<\gamma/2
  }.
\end{equation}
For any \(j\in A\) and \(k\notin A\), the gap assumption gives
\(
  \pi_j^*\geq\pi_k^*+\gamma.
\)
On \(\mathcal B\),
\begin{align*}
  \wh\pi_j^*
  &>\pi_j^*-\gamma/2
  \geq\pi_k^*+\gamma/2
  >\wh\pi_k^*.
\end{align*}
Thus every member of \(A\) is ranked above every nonmember. By \eqref{eq:bootstrap-uniform} with \(t=\gamma/2\),
\begin{align*}
  \Pp^*(\mathcal B^c)
  &\leq2p\exp\{-2B(\gamma/2)^2\}
  =2p\exp(-B\gamma^2/2).
\end{align*}
Therefore \(\Pp^*(\mathcal B)\) is at least the quantity in \eqref{eq:bootstrap-rank-prob}. \(\square\)

\subsection{Proofs for average dual information}

\textbf{Proof of Theorem \ref{thm:adi-properties}.}
By definition, \(\Gamma_f\in[0,1]\) and \(Q_f\in[0,1]\). Since \(\lambda\in[0,1]\), \(I_\lambda(f)\) is a convex combination of these two numbers. Therefore
\[
  0\leq\lambda\Gamma_f+(1-\lambda)Q_f\leq1,
\]
which proves part (i).

For part (ii), suppose first that \(\lambda\in(0,1)\) and \(I_\lambda(f)=1\). Then
\begin{equation}
\label{eq:proof-adi-one}
  0
  =1-I_\lambda(f)
  =\lambda(1-\Gamma_f)+(1-\lambda)(1-Q_f).
\end{equation}
Both terms on the right are nonnegative, and both coefficients are strictly positive. A sum of two nonnegative numbers is zero only if each number is zero. Hence \(\Gamma_f=1\) and \(Q_f=1\). Conversely, if both components equal one, their convex combination equals one. Similarly, if \(I_\lambda(f)=0\), then
\[
  0=\lambda\Gamma_f+(1-\lambda)Q_f
\]
is a sum of two nonnegative terms with positive coefficients, so \(\Gamma_f=Q_f=0\). The converse is immediate.

For part (iii), applying the same deterministic clustering rule to two identical copies of the original data gives the same partition. After the identity label alignment, the multiclass MCC is one, so \(\Gamma_{\mathrm{id}}=1\). Applying the same screening rule to the same coordinates gives the same selected support, and the dependency set of identity coordinate \(j\) is \(\{j\}\). The retained ancestry therefore contains every selected source coordinate, so \(Q_{\mathrm{id}}=1\). Part (iii) follows from the definition of \(I_\lambda\).

For part (iv), subtract the two ADI values:
\begin{align}
  I_\lambda(f)-I_\lambda(g)
  &=\lambda(\Gamma_f-\Gamma_g)
  +(1-\lambda)(Q_f-Q_g).
\label{eq:proof-adi-dominance}
\end{align}
Under the stated componentwise inequalities, both terms on the right are nonnegative, proving weak dominance. If \(\lambda\in(0,1)\) and at least one component inequality is strict, then the corresponding term in \eqref{eq:proof-adi-dominance} is strictly positive and the other is nonnegative, proving strict dominance.

For part (v), the formula
\[
  I_\lambda(f)=Q_f+\lambda(\Gamma_f-Q_f)
\]
shows directly that the function is affine in \(\lambda\). Moreover,
\begin{align*}
  \abs{I_\lambda(f)-I_\lambda(g)}
  &=\abs{
  \lambda(\Gamma_f-\Gamma_g)
  +(1-\lambda)(Q_f-Q_g)}\\
  &\leq
  \lambda\abs{\Gamma_f-\Gamma_g}
  +(1-\lambda)\abs{Q_f-Q_g},
\end{align*}
where the last step is the triangle inequality and the coefficients are nonnegative. This proves \eqref{eq:adi-lipschitz}. \(\square\)

\textbf{Proof of Proposition \ref{prop:adi-path}.}
For part (i), the assumptions give equality of both ADI components:
\(
  \Gamma_g=\Gamma_f
\)
and
\(
  Q_g=Q_f.
\)
Substituting these equalities into the definition yields
\[
  I_\lambda(g)
  =\lambda\Gamma_g+(1-\lambda)Q_g
  =\lambda\Gamma_f+(1-\lambda)Q_f
  =I_\lambda(f).
\]

For part (ii), let \(A_t\subseteq\cS\) be the active source coordinates covered by \(f_t\). Then
\(
  Q_t=|A_t|/s.
\)
If the transition to \(f_{t+1}\) covers exactly \(b_t\) previously uncovered active coordinates, nestedness implies
\(
  |A_{t+1}|=|A_t|+b_t.
\)
Therefore
\begin{equation}
\label{eq:proof-q-increment}
  Q_{t+1}-Q_t=\frac{b_t}{s}.
\end{equation}
Using the ADI definition and \eqref{eq:proof-q-increment},
\begin{align*}
  I_\lambda(f_{t+1})-I_\lambda(f_t)
  &=\lambda(\Gamma_{t+1}-\Gamma_t)
  +(1-\lambda)(Q_{t+1}-Q_t)\\
  &=\lambda(\Gamma_{t+1}-\Gamma_t)
  +(1-\lambda)\frac{b_t}{s},
\end{align*}
which is \eqref{eq:adi-increment}. If \(b_t=b\) and \(\Gamma_t\) is constant, every increment equals \((1-\lambda)b/s\), so the sequence is linear in \(t\). If \(\Gamma_{t+1}\geq\Gamma_t\), the first term is nonnegative, and the increment is at least \((1-\lambda)b_t/s\). \(\square\)

\textbf{Proof of Theorem \ref{thm:adi-consistency}.}
For notational simplicity, write
\[
  \wh S=\wh\cS_Y,
  \quad S=\cS,
  \quad \wh A=\wh\cA_f,
  \quad A=\cA_f,
  \quad \wh s=|\wh S|,
  \quad s=|S|.
\]
By definition,
\(
  a_n=|\wh S\triangle S|
\)
and
\(
  b_n=|\wh A\triangle A|.
\)
The cardinality difference between two finite sets is bounded by the cardinality of their symmetric difference, so
\begin{equation}
\label{eq:proof-cardinality-error}
  |\wh s-s|\leq a_n.
\end{equation}
The assumption \(a_n<s\) and \eqref{eq:proof-cardinality-error} imply
\begin{equation}
\label{eq:proof-shat-lower}
  \wh s\geq s-a_n>0.
\end{equation}
Thus \(\wh Q_f\) is well defined.

We next compare the intersection numerators. For arbitrary sets \(B_1,B_2,C_1,C_2\),
\begin{equation}
\label{eq:proof-intersection-symdiff}
  (B_1\cap C_1)\triangle(B_2\cap C_2)
  \subseteq
  (B_1\triangle B_2)\cup(C_1\triangle C_2).
\end{equation}
To verify the inclusion, take an element in the left-hand symmetric difference. It belongs to exactly one of the two intersections. If its membership in both first-coordinate sets and both second-coordinate sets were unchanged, it would belong to either both intersections or neither, a contradiction. Hence its membership differs in at least one pair, placing it in the right-hand union. Applying \eqref{eq:proof-intersection-symdiff} with \((B_1,C_1)=(\wh S,\wh A)\) and \((B_2,C_2)=(S,A)\) gives
\begin{equation}
\label{eq:proof-numerator-symdiff}
  |(\wh S\cap\wh A)\triangle(S\cap A)|
  \leq a_n+b_n.
\end{equation}
For finite sets \(U,V\),
\(
  ||U|-|V||\leq|U\triangle V|.
\)
Therefore, if
\(
  \wh N=|\wh S\cap\wh A|
\)
and
\(
  N=|S\cap A|,
\)
then
\begin{equation}
\label{eq:proof-numerator-error}
  |\wh N-N|\leq a_n+b_n.
\end{equation}

Now decompose the structural-coverage error:
\begin{align}
  |\wh Q_f-Q_f|
  &=\left|\frac{\wh N}{\wh s}-\frac{N}{s}\right|
  \nonumber\\
  &\leq
  \left|\frac{\wh N-N}{\wh s}\right|
  +N\left|\frac{1}{\wh s}-\frac{1}{s}\right|
  \nonumber\\
  &=\frac{|\wh N-N|}{\wh s}
  +N\frac{|s-\wh s|}{s\wh s}.
\label{eq:proof-q-decompose}
\end{align}
Since \(N\leq s\), equations \eqref{eq:proof-cardinality-error}, \eqref{eq:proof-shat-lower}, and \eqref{eq:proof-numerator-error} imply
\begin{align*}
  |\wh Q_f-Q_f|
  &\leq
  \frac{a_n+b_n}{s-a_n}
  +s\frac{a_n}{s(s-a_n)}\\
  &=\frac{2a_n+b_n}{s-a_n},
\end{align*}
which proves \eqref{eq:q-bound}.

Finally,
\begin{align*}
  |\wh I_\lambda(f)-I_\lambda(f)|
  &=\left|\lambda(\wh\Gamma_f-\Gamma_f)
  +(1-\lambda)(\wh Q_f-Q_f)\right|\\
  &\leq
  \lambda|\wh\Gamma_f-\Gamma_f|
  +(1-\lambda)|\wh Q_f-Q_f|\\
  &\leq
  \lambda|\wh\Gamma_f-\Gamma_f|
  +(1-\lambda)\frac{2a_n+b_n}{s-a_n},
\end{align*}
proving \eqref{eq:adi-bound}.

Under \eqref{eq:adi-consistency-conditions}, \(a_n/s\to0\) in probability, so with probability tending to one \(a_n<s/2\), say. On that event,
\[
  \frac{2a_n+b_n}{s-a_n}
  =\frac{2(a_n/s)+(b_n/s)}{1-a_n/s}
  \xrightarrow{\Pp}0.
\]
The agreement term also converges to zero by assumption. The upper bound in \eqref{eq:adi-bound} therefore converges to zero in probability, proving \(\wh I_\lambda(f)\to I_\lambda(f)\). \(\square\)


\begin{thebibliography}{99}

\bibitem[Bair et al.(2006)Bair, Hastie, Paul, and Tibshirani]{bair2006}
Bair, E., Hastie, T., Paul, D., and Tibshirani, R. (2006).
Prediction by supervised principal components.
\emph{Journal of the American Statistical Association} \textbf{101}, 119--137.

\bibitem[Benjamini and Hochberg(1995)]{benjaminihochberg1995}
Benjamini, Y. and Hochberg, Y. (1995).
Controlling the false discovery rate: A practical and powerful approach to multiple testing.
\emph{Journal of the Royal Statistical Society, Series B} \textbf{57}, 289--300.

\bibitem[B{\"o}hm and Hornik(2012)]{bohmhornik2012}
B{\"o}hm, W. and Hornik, K. (2012).
A Kolmogorov--Smirnov test for \(r\) samples.
\emph{Fundamenta Informaticae} \textbf{117}, 103--125.

\bibitem[Conover(1965)]{conover1965}
Conover, W. J. (1965).
Several \(k\)-sample Kolmogorov--Smirnov tests.
\emph{The Annals of Mathematical Statistics} \textbf{36}, 1019--1026.

\bibitem[Fan and Fan(2008)]{fanfan2008}
Fan, J. and Fan, Y. (2008).
High-dimensional classification using features annealed independence rules.
\emph{The Annals of Statistics} \textbf{36}, 2605--2637.

\bibitem[Fan and Lv(2008)]{fanlv2008}
Fan, J. and Lv, J. (2008).
Sure independence screening for ultrahigh dimensional feature space.
\emph{Journal of the Royal Statistical Society, Series B} \textbf{70}, 849--911.

\bibitem[Gorodkin(2004)]{gorodkin2004}
Gorodkin, J. (2004).
Comparing two \(K\)-category assignments by a \(K\)-category correlation coefficient.
\emph{Computational Biology and Chemistry} \textbf{28}, 367--374.

\bibitem[Jurman et al.(2012)Jurman, Riccadonna, and Furlanello]{jurman2012}
Jurman, G., Riccadonna, S., and Furlanello, C. (2012).
A comparison of MCC and CEN error measures in multi-class prediction.
\emph{PLoS ONE} \textbf{7}, e41882.

\bibitem[Kiefer(1959)]{kiefer1959}
Kiefer, J. (1959).
\(K\)-sample analogues of the Kolmogorov--Smirnov and Cram{\'e}r--von Mises tests.
\emph{The Annals of Mathematical Statistics} \textbf{30}, 420--447.

\bibitem[Li et al.(2012)Li, Zhong, and Zhu]{li2012}
Li, R., Zhong, W., and Zhu, L. (2012).
Feature screening via distance correlation learning.
\emph{Journal of the American Statistical Association} \textbf{107}, 1129--1139.

\bibitem[Mai and Zou(2013)]{maizou2013}
Mai, Q. and Zou, H. (2013).
The Kolmogorov filter for variable screening in high-dimensional binary classification.
\emph{Biometrika} \textbf{100}, 229--234.

\bibitem[Mai and Zou(2015)]{maizou2015}
Mai, Q. and Zou, H. (2015).
The fused Kolmogorov filter: A nonparametric model-free screening method.
\emph{The Annals of Statistics} \textbf{43}, 1471--1497.

\bibitem[Massart(1990)]{massart1990}
Massart, P. (1990).
The tight constant in the Dvoretzky--Kiefer--Wolfowitz inequality.
\emph{The Annals of Probability} \textbf{18}, 1269--1283.

\bibitem[Matthews(1975)]{matthews1975}
Matthews, B. W. (1975).
Comparison of the predicted and observed secondary structure of T4 phage lysozyme.
\emph{Biochimica et Biophysica Acta--Protein Structure} \textbf{405}, 442--451.

\bibitem[Witten and Tibshirani(2010)]{wittentibshirani2010}
Witten, D. M. and Tibshirani, R. (2010).
A framework for feature selection in clustering.
\emph{Journal of the American Statistical Association} \textbf{105}, 713--726.

\bibitem[Wolf and Naus(1973)]{wolfnaus1973}
Wolf, E. H. and Naus, J. I. (1973).
Tables of critical values for a \(k\)-sample Kolmogorov--Smirnov test statistic.
\emph{Journal of the American Statistical Association} \textbf{68}, 994--997.

\end{thebibliography}
\end{document}